\newcommand*\circled[1]{\tikz[baseline=(char.base)]{
            \node[shape=circle,draw,inner sep=1pt,scale=0.8] (char) {#1};}}
\theoremstyle{plain}
\newtheorem{lemma}{Lemma}
\newtheorem{assumption}{Assumption}
\newtheorem{proposition}{Proposition}
\newtheorem*{problem*}{Problem}
\newtheorem*{theorem*}{Theorem}
\newtheorem{assumption*}{Assumption}
\theoremstyle{definition}
\newtheorem{remark}{Remark}
\newtheorem{definition}{Definition}
\definecolor{amber}{rgb}{1.0, 0.3, 0.0}
\newcommand{\myvar}[1]{#1}
\newcommand{\tildedot}[1]{\dot{\tilde{\myvar{#1}}}}
\newcommand{\myset}[1]{\mathcal{#1}} %
\algnewcommand{\Initialize}[1]{%
  \State \textbf{Initialize:}
  \Statex \hspace*{\algorithmicindent}\parbox[t]{.8\linewidth}{\raggedright #1}
}
\begin{document}

\title{
  A contract negotiation scheme for safety verification of interconnected systems 
}

\author{Xiao Tan, Antonis Papachristodoulou,  and Dimos V. Dimarogonas %
\thanks{ This work was supported in part by Swedish Research Council (VR), in part by Horizon Europe SymAware, in
part by  ERC CoG LEAFHOUND, in part by EU CANOPIES Project, and in part by  Knut and Alice Wallenberg Foundation (KAW). Xiao Tan and Dimos V. Dimarogonas are with the School of EECS, Royal Institute of Technology (KTH), 100 44 Stockholm, Sweden. Antonis Papachristodoulou is with the Department of Engineering Science, University of Oxford, Oxford, United Kingdom. Email: 
        {\tt\small \{xiaotan, dimos\}@kth.se, antonis@eng.ox.ac.uk}.}
}

\maketitle
\thispagestyle{plain}
\pagestyle{plain}

\begin{abstract}
  This paper proposes a (control) barrier function synthesis and safety verification scheme for interconnected nonlinear systems based on assume-guarantee contracts (AGC) and sum-of-squares (SOS) techniques. It is well-known that the SOS approach does not scale well for barrier function synthesis for high-dimensional systems. In this paper, we show that compositional methods like AGC can mitigate this problem. We formulate the synthesis problem into a set of small-size problems, which constructs local contracts for subsystems, and propose a negotiation scheme among the subsystems at the contract level. The proposed scheme is then implemented numerically on two examples: vehicle platooning and room temperature regulation.

\end{abstract}

\section{Introduction}

In many engineering applications, system states need to be confined to a specific set of safe states. Designing active control to achieve this property and verifying a given closed-loop system regarding this property are known as safety synthesis and verification problems. Many safety-ensuring control approaches have been proposed in the literature, including reachability analysis \cite{bansal2017hamilton}, control barrier functions (CBF) \cite{Ames2017}, model predictive control \cite{camacho2013model}, prescribed performance control \cite{bechlioulis2008robust} among many others. In particular, when a CBF is shown to exist, safety-ensuring feedback can be constructed, and the safety of the system is certified \cite{prajna2004safety}. Thus, there has been lots of interest in synthesizing valid control barrier functions numerically, by, for example, sum-of-square approaches \cite{clark2021verification,wang2023safety}, learning-based approaches \cite{robey2020learning,abate2021fossil}, and Hamiltonian-Jacobi reachability analysis \cite{choi2021robust}. However, most of these approaches are limited to dynamical systems of small to moderate size, and will become computationally intractable for large-scale systems.

Many complex, large-scale systems naturally impose an interconnected structure. It is thus essential to exploit this structure to deal with the numerical scalability issue. Along this line of research, the idea of compositional reasoning has been leveraged so that one could establish properties of the interconnected system by reasoning properties on its components. As for system safety/invariance property, \cite{jagtap2020compositional,lyu2022small} propose to synthesize local barrier functions,  establish local input-to-state safety properties, and compose the local properties by checking a small-gain-like condition. However, it remains unclear how to adapt local safety properties if the condition fails. On the other hand, \cite{coogan2014dissipativity}  certifies the safety property by seeking a Lyapunov function of the interconnected system. Safety is thus certified if a subset of the constructed Lyapunov function has no intersection with the unsafe region. It is worth noting that the search for a Lyapunov function is solved as a centralized semi-definite problem, and is still computationally demanding when the size of the interconnected system becomes larger.

In the literature of formal methods and model checking \cite{tabuada2009verification}, the composition of system properties is usually approached through the notion of an assume-guarantee contract \cite{benveniste2018contracts}. In plain words, a contract describes the behavior that a system will exhibit (guarantees) subject to the influence of the environment (assumptions). Originally, the main application domain of a contract in model checking was for discrete space systems. When contracts are applied to certify the safety of complex continuous space systems, circular reasoning of implications might exist. This is not a trivial problem in general, and the AGC framework is always sound only if a hierarchical structure exists \cite{saoud2021assume}. \cite{kim2017small} introduces parameterized AGCs, laying the foundation for finding local AGCs that can be composed of. \cite{ghasemi2020compositional} deals with invariance properties of discrete-time linear systems. The authors show that the composition of all local AGCs can be formulated as a linear program when using zonotopic representation to parameterize the constraint set and input set. In \cite{eqtami2019quantitative}, the authors consider a finite transition system and propose to determine how safe a state is by applying value iterations. The contracts are iterated locally, yet no completeness guarantee can be asserted. 

Recently, there are a few works that apply AGCs to control synthesis problems for continuous-time systems. \cite{shali2022composition} utilizes behaviour AGC for control design for linear systems, and  \cite{liu2022compositional} applies AGCs to design local feedback law under signal temporal logic specifications.

In this work, we provide a tractable safety verification scheme for continuous-time interconnected nonlinear systems, leveraging sum-of-square techniques and assume-guarantee contracts. Our result is built upon \cite{saoud2021assume} on the invariance AGCs for continuous-time systems that circumvent circular reasoning under mild assumptions. Our proposed approach consists of the construction of local AGCs and the search for compatible AGCs. In contrast to \cite{coogan2014dissipativity,ghasemi2020compositional}, we propose to negotiate local contracts only with its neighbors, and thus no central optimization is needed.  Once a set of compatible AGCs is returned, the safety of the interconnected system is certified. Moreover, we show that the proposed algorithms will terminate in finite steps and find a solution whenever one exists under relevant technical assumptions in the case of acyclic interconnections or for homogeneous systems.

\section{Notation and Preliminaries} \label{sec:notation}

\textit{Notation:}  $\mathbb{R}^n$ denotes the $n$-dimensional vector space. A vector $\myvar{a} = (a_1, a_2, \ldots, a_n)\in \mathbb{R}^n$ is a column vector unless stated otherwise. For $Z\subseteq \mathbb{R}^n$, we denote by $M(Z)$ the set of continuous-time maps $z: E \to Z$, where $E\in \{[0,a],a\ge 0\} \cup \{[0,a), a>0\} \cup \{\mathbb{R}_+\}$  is a time interval. Given sets $X_i\subseteq \mathbb{R}^{n_i}$, $i\in \myset{I} = \{1,2,\ldots,N\}$,  the Cartesian product $X_1\times X_2 \times \cdots \times X_N$ is denoted by $ \Pi_{i\in \myset{I}} X_i$. Let $x\in \mathbb{R}^n$ be an independent variable. Denote by $\mathcal{R}[x]$ the set of polynomials in the variable $x$. We call a polynomial $p\in \mathcal{R}[x]$ sum-of-squares if there exist polynomials $g_1, g_2, \ldots, g_N$ in the variable $x$ such that $p = \sum_{i= 1}^N g_i^2.$ Denote by $\Sigma [x]$ the set of sum-of-squares polynomials in $x$. Let $\mathcal{R}[x_1,x_2,\ldots, x_n], \Sigma [x_1,x_2, \ldots,x_n]$ denote the sets of polynomials and SOS polynomials of independent variables $x_1,x_2,\ldots, x_n$, respectively. Consider a directed graph $(\mathcal{I}, \mathcal{E}), \mathcal{E} \subseteq \mathcal{I}\times \mathcal{I}$. Denote by $N(i) = \{j\in \mathcal{I}: (j,i)\in \mathcal{E}\}$ the set of parent nodes of node $i$, and $\text{Child}(i) = \{k\in \mathcal{I}: (i,k)\in \mathcal{E}\}$ the set of its child nodes. We call node $i$ a root node if $N(i)=\emptyset$; node $i$ is a leaf node if $\text{Child}(i) = \emptyset$.

We first introduce the definitions of continuous-time systems, their interconnections, and assume-guarantee contracts tailored from \cite{saoud2021assume} for the safety verification problem. 

\subsection{Systems and Interconnections}
In this work, we consider continuous-time systems formally defined as follows.
\begin{definition}[Continuous-time system]
    A continuous-time system $G$ is a tuple 
    \begin{equation*}
        G = (U, W, X, Y, X^0,\mathcal{T}), 
    \end{equation*}
    where the sets $U, W, X, Y, X^0$ represent the external input set, the internal input set, the state set, the output set, and the initial state set, respectively. $\myvar{u} \in U, \myvar{w} \in W, \myvar{x} \in X, \myvar{y} \in Y$ are the external input, internal input, local state, and local output variables. $ \mathcal{T}\subseteq M(U\times W \times X \times Y)$ characterizes all the trajectories that are described by a differential equation
\begin{align} 
    \dot{\myvar{x}} (t)&= f(\myvar{x},w) + g(\myvar{x},w) \myvar{u} \label{eq:system_dynamics}
\end{align}
and $  o: \myvar{x} \mapsto \myvar{y}$ is the output function. 
\end{definition}
To guarantee the existence and uniqueness of the system trajectory, we conveniently assume that the vector field and the output map are locally Lipschitz. Now we formally define an interconnected system. 

\begin{definition}
    Given $N$ subsystems $\{ G_i\}_{i\in \myset{I}}$, $G_i = (U_i, W_i, X_i, Y_i, X^0_i,\mathcal{T}_i), \myset{I}= \{1,2,\ldots,N\},$ and  a binary connectivity relation $ \mathcal{E}\subseteq \myset{I}\times \myset{I}$, we say  $\{ G_i\}_{i\in \myset{I}}$ is \emph{compatible for composition} with respect to $\mathcal{E}$ if $\Pi_{j\in N(i)} Y_j \subseteq W_i$, where $N(i) = \{j: (j,i)\in \mathcal{E}\}$ is the index set of subsystems that influence $G_i$.  $G_j$ ($G_i$) is referred to as a parent (child) node of  $G_i$ ($G_j$).
\end{definition}
In this definition, a set of subsystems is compatible for composition when, for each subsystem, the output space of all parental subsystems is a subset of its internal input space.

When the subsystems $\{G_i\}_{i\in \myset{I}}$ are compatible for composition w.r.t.  $ \mathcal{E}$, the composed system, also referred to as the interconnected system, is denoted by $\langle (G_i)_{i\in \myset{I}}, \mathcal{E}\rangle = (U, \{0\}, X, Y, X^0, \mathcal{T})$, where $U = \Pi_{i\in \myset{I}} U_i, X = \Pi_{i\in \myset{I}} X_i, Y = \Pi_{i\in \myset{I}} Y_i,  X^0 = \Pi_{i\in \myset{I}} X_i^0$.  Denote the composed state by $\myvar{x}$, the composed external input $\myvar{u}$, and the composed output $\myvar{y}$. Then $(\myvar{u}(t),0,\myvar{x}(t),\myvar{y}(t))\in \myset{T} $ if and only if for all $i\in \myset{I}$, there exists $(\myvar{u}_i(t),\myvar{w}_i(t),\myvar{x}_i(t),\myvar{y}_i(t))\in \myset{T}_i$ and $\myvar{w}_i(t) = (\myvar{y}_{j_1}(t), \myvar{y}_{j_2}(t), \ldots , \myvar{y}_{j_p}(t)),$ where $N(i) = \{j_1, j_2, \ldots, j_p\}$.

\subsection{Assume-guarantee contracts for invariance}
To begin with, we introduce notation that will help us define the set of all continuous trajectories that always stay in a set. Let a nonempty set $S\subseteq \mathbb{R}^n$.  Define 
\begin{equation}
    I_S^E = \{z: E \to \mathbb{R}^n\in M(\mathbb{R}^n): \forall t\in E, z(t)\in S \},
\end{equation}
where $E$ is a time interval. In the following, the superscript $E$ is neglected as it is usually chosen as the maximal time interval of the existence of solutions to the continuous-time system. An invariance assume-guarantee contract (iAGC) is defined as follows:
\begin{definition} \label{def:iAGC}
    For  a continuous-time system $ G= (U, W, X, Y, X^0, \mathcal{T}) $, an \emph{invariance assume-guarantee contract} (iAGC) for $G$ is a tuple $C = ( I_{\underline{W}}, I_{\underline{X}}, I_{\underline{Y}})$ where $\underline{W}\subseteq W,\underline{X}\subseteq X,\underline{Y}\subseteq Y $. We refer to $ I_{\underline{W}}$ as the set of assumptions on the internal inputs, and $I_{\underline{X}}, I_{\underline{Y}}$ as the sets of guarantees on the states and outputs.
    We say a system $G$ \emph{satisfies a contract} $C= ( I_{\underline{W}}, I_{\underline{X}}, I_{\underline{Y}})$, denoted $G \models C$, if there exists a feedback control $k(\cdot,\cdot):X\times W \to U $ such that   
    for all $t>0$, for all $ w|_{[0,t]} \in I_{\underline{W}}$,  the state and output fulfill $x|_{[0,t]} \in I_{\underline{X}},y|_{[0,t]} \in I_{\underline{Y}}$ for all trajectories $(u(t)=k(x,w),w(t),x(t),y(t)) \in \myset{T} $.
\end{definition}

A key result that establishes the compositional reasoning of the system property is the following:
\begin{lemma}[Compositional reasoning] \label{lem:compositional_reasoning}
    Consider an interconnected system $\langle (G_i)_{i\in \myset{I}}, \mathcal{E}\rangle = (U,\{0\}, X, Y, X^0, \mathcal{T})$ composed of $N$ subsystems with a compatible binary connectivity relation $\mathcal{E}$. If for each subsystem $G_i = ( U_i, W_i, X_i, Y_i, X_i^0, \mathcal{T}_i) $, there exists an invariance assume-guarantee contract $C_i = (  I_{\underline{W}_i}, I_{\underline{X}_i},  I_{\underline{Y}_i} )$ such that $G_i \models C_i $ and $         \Pi_{j\in N(i)}{I_{\underline{Y}_j}} \subseteq I_{\underline{W}_i},$
     then $\langle (G_i)_{i\in \myset{I}}, \mathcal{E}\rangle \models C$ with $C= ( \{0\}, \Pi_{i\in \myset{I}}{I_{\underline{X}_i}}, \Pi_{i\in \myset{I}}{I_{\underline{Y}_i}}).$
\end{lemma}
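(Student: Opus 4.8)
The plan is to exhibit a single feedback for the composed system and show it enforces the product guarantee by a forward-invariance argument over the maximal interval of existence. First I would assemble the composite control law: since $G_i \models C_i$, each subsystem admits a feedback $k_i : X_i \times W_i \to U_i$ enforcing $C_i$. In the composition the internal input of $G_i$ is the static coupling $w_i = (y_{j_1}, \ldots, y_{j_p})$ with $N(i) = \{j_1,\ldots,j_p\}$, and each $y_j = o(x_j)$ is a function of the composed state; substituting these couplings turns $(k_i)_{i\in\mathcal{I}}$ into a well-defined state feedback $k : X \to U$ for $\langle (G_i)_{i\in\mathcal{I}}, \mathcal{E}\rangle$ (recall its internal input set is the singleton $\{0\}$, so the assumption $I_{\{0\}}$ of the target contract $C$ is trivially met). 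By the definition of the interconnection, the projection of any resulting closed-loop trajectory onto the $i$-th block is itself a trajectory of $G_i$ driven by $u_i = k_i(x_i,w_i)$ and by the realized coupling signal $w_i(\cdot)$. It then remains to prove $x|_{[0,t]} \in \Pi_{i} I_{\underline{X}_i}$ and $y|_{[0,t]} \in \Pi_{i} I_{\underline{Y}_i}$ for every $t$ in the maximal interval, which is exactly $\langle (G_i)_{i\in\mathcal{I}}, \mathcal{E}\rangle \models C$.

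Second, I would run a first-exit-time argument. Assuming the guarantee sets $\underline{X}_i, \underline{Y}_i$ are closed and that an initial consistency (e.g. $X_i^0 \subseteq \underline{X}_i$) makes the good-time set nonempty, define $t^\ast$ as the infimum of times at which some block leaves $\underline{X}_i$ or $\underline{Y}_i$. Continuity of the composed trajectory and of the output maps, together with closedness, forces all guarantees to hold on the whole of $[0,t^\ast]$, so $x_i|_{[0,t^\ast]} \in I_{\underline{X}_i}$ and $y_i|_{[0,t^\ast]} \in I_{\underline{Y}_i}$ for every $i$. The coupling hypothesis $\Pi_{j\in N(i)} I_{\underline{Y}_j} \subseteq I_{\underline{W}_i}$ then converts the parents' output guarantees into an assumption certificate for $G_i$, namely $w_i|_{[0,t^\ast]} \in I_{\underline{W}_i}$. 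Feeding this into $G_i \models C_i$ re-derives the guarantees on $[0,t^\ast]$ and must be pushed an instant past $t^\ast$ to contradict the definition of $t^\ast$, unless $t^\ast$ already equals the end of the maximal interval.

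This forward extension past $t^\ast$ is exactly where the circular reasoning appears and is the main obstacle: certifying $w_i \in I_{\underline{W}_i}$ on $[0, t^\ast + \varepsilon]$ requires the neighbours' outputs in $I_{\underline{Y}_j}$ on the same interval, which is the very conclusion sought, so naive bootstrapping is circular. I would break the cycle by passing from the trajectory-level contract to its instantaneous (Nagumo sub-tangentiality) form: because $I_{\underline{W}_i}$ contains every map taking values in $\underline{W}_i$, the contract in fact guarantees robust invariance of $\underline{X}_i$ against all pointwise internal-input values in $\underline{W}_i$, a condition on the closed-loop vector field at each boundary point of $\underline{X}_i$. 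The coupling at time $t^\ast$ is static, i.e. $w_i(t^\ast)$ depends only on $x(t^\ast)$ and not on future outputs, so at $x(t^\ast) \in \Pi_i \underline{X}_i$ the composite vector field is componentwise sub-tangential to the product set, which prohibits exit and yields the contradiction without reference to the future. This is the step that genuinely exploits the continuous-time structure and the mild regularity assumptions inherited from \cite{saoud2021assume}; with it in hand, $t^\ast$ coincides with the end of the maximal interval, establishing $\langle (G_i)_{i\in\mathcal{I}}, \mathcal{E}\rangle \models C$.
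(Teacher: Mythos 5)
Your strategy (assemble the composite feedback, take a first-exit time $t^\ast$, then exploit the static, memoryless coupling to argue instantaneously) is a genuinely different route from the paper's: the paper gives no internal proof of this lemma at all, but obtains it as a special case of \cite[Theorem 3]{saoud2021assume}, where circularity is broken at the level of behaviours (via the distinction between guarantees holding on $[0,t)$ versus $[0,t]$, together with closedness-type assumptions), not at the level of vector fields. You correctly identify the circular-reasoning obstruction, and your first two steps (composite feedback well-defined since the composed system's assumption $I_{\{0\}}$ is trivial; projections of the closed-loop trajectory are subsystem trajectories) are fine. The problem is the last step, and it is a genuine gap rather than a presentational one.

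Two inferences there fail. First, the claim that the contract ``guarantees robust invariance of $\underline{X}_i$ \ldots a condition on the closed-loop vector field at each boundary point of $\underline{X}_i$'' is not sound: satisfaction in Definition~\ref{def:iAGC} only constrains trajectories issued from $X_i^0$, so it forces sub-tangentiality only at boundary points of $\underline{X}_i$ that are reachable from $X_i^0$ under inputs in $I_{\underline{W}_i}$. For example, $\dot x = x^2$ with $X^0=\{0\}$ and $\underline{X}=[-1,2]$ satisfies the (input-free) contract, yet at the boundary point $x=2$ the field points strictly outward. Second, and more fundamentally, sub-tangentiality of the composite field \emph{at the single point} $x(t^\ast)$ does not ``prohibit exit'': Nagumo-type sufficiency requires the tangency condition at all boundary points near $x(t^\ast)$, plus uniqueness of solutions, and for product sets also the inclusion of the product of contingent cones in the contingent cone of the product (which can fail for irregular closed sets). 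A concrete counterexample to the inference you make: with $S=\{(z_1,z_2): z_2\le 0\}$ and the linear field $f(z)=(1,z_1)$, at the origin $f(0,0)=(1,0)$ lies in the tangent cone of $S$, yet the unique solution $z_1(t)=t$, $z_2(t)=t^2/2$ exits $S$ immediately. A clean repair that keeps your architecture but avoids cones entirely: for each $i$, compare the true block $x_i(\cdot)$ on $[t^\ast,t^\ast+\varepsilon]$ with the auxiliary trajectory of $G_i$ driven by the projection of $w_i(\cdot)$ onto $\Pi_{j\in N(i)}\underline{Y}_j\subseteq \underline{W}_i$; this input is admissible, so (using uniqueness to glue it to the realized trajectory on $[0,t^\ast]$) the contract keeps the auxiliary trajectory in $\underline{X}_i$, and local Lipschitz continuity of the dynamics and outputs plus Gr\"onwall give $\mathrm{dist}\bigl(x_i(t),\underline{X}_i\bigr)\le C\int_{t^\ast}^{t}\sum_{j\in N(i)}\mathrm{dist}\bigl(x_j(s),\underline{X}_j\bigr)\,\mathrm{d}s$. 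Summing over $i$ and applying Gr\"onwall once more to the total distance, which vanishes at $t^\ast$, forces it to vanish identically on $[t^\ast,t^\ast+\varepsilon]$, contradicting the definition of $t^\ast$. This closes the argument using only the regularity the paper already assumes (locally Lipschitz vector fields and output maps, closed guarantee sets).
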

This lemma  is a special case of \cite[Theorem 3]{saoud2021assume} and thus we neglect its proof here. While this lemma may seem intuitive, it is worth highlighting that we can not deduce directly the conclusion due to possible circular reasoning of implications. One such example for systems with non-locally Lipschitz vector fields is shown in \cite[Example 6]{saoud2021assume}. The AGC framework helps to circumvent possible  circular reasoning and enables compositional reasoning of the forward invariance property of interconnected systems.

\subsection{System safety and barrier functions}
Now we give the formal definition of safety of a continuous-time system. Throughout this work, we refer to a \emph{safe region} as the collection of states that are benign, for example, unoccupied configuration space in robotic applications, and a \emph{safe set} as a subset of the safety region that is also forward invariant.

\begin{definition} \label{def:system_safety}
    Given a system $G = (U, W, X, Y, X^0, \mathcal{T})$ and a safe region $\myset{Q}\subseteq X$, we say the system $G$ is \emph{safe with respect to an internal input set $\underline{W}$} if and only if $X^0 \subseteq \myset{Q}$, and for all initial states $x_0 \in X^0$ and for all internal input signals $ w|_{[0,t]}\in I_{\underline{W}},  $ there exists an external input signal $u|_{[0,t]}\in I_U$ such that $x|_{[0,t]}\in I_{\myset{Q}}$ for all $t>0$.
\end{definition}
 We note that the internal input $w$ is assumed to be known via communication. This is in contrast to the definition of a robust control invariant set where $w$ is treated as disturbance and unknown \cite{blanchini1999set}, in which case the qualifier $\exists u $  proceeds $\forall w$. One way to certify the safety of the system is to find a (control) barrier function, also known as a barrier certificate \cite{prajna2004safety}, which is given by
\begin{definition} \label{def:cbf}
    A  differential function $h: X\to \mathbb{R}$ for system $G= (U, W, X, Y, X^0, \mathcal{T})$ is called a \emph{control barrier function}  with respect to an internal input set $\underline{W}\subseteq W$, if  there exists a class $\mathcal{K}$ function $\alpha$ such that $  \forall w \in \underline{W}, \exists u\in U$ the following inequality holds for all $ x $
        \begin{equation} \label{eq:cbf_condition}
            \nabla h(x) f(x,w) + \nabla h(x)g(x,w)u + \alpha(h(\myvar{x}))\geq 0.
        \end{equation}
\end{definition}
When the external input set is empty, i.e., $U = \emptyset$, $h$ is called a \emph{barrier function} as this system has no active control. When such a (control) barrier function is found,  then the set $ \myset{C}=  \{x: h(x)\geq 0\}$ is (controlled) forward invariant, and asymptotically stable when it is compact\cite{Ames2017}. If $X^0 \subseteq \myset{C}\subseteq \myset{Q}$, then system safety is certified\cite{Ames2017,wang2023safety}. In general, finding an invariant set $\myset{C}$ is computationally expensive for large nonlinear systems. 

\subsection{Sum-of-squares programs}
One tractable approach to deal with infinite inequalities as in \eqref{eq:cbf_condition} is via sum-of-squares programming. A standard sum-of-squares (SOS) program takes the following form
\begin{equation}
    \begin{aligned}
       &  \min_{p_1,\ldots,p_k} \textstyle \sum_{j=1}^k a_j p_j \\
       & \text{s.t. } b_0(x) + \textstyle \sum_{j=1}^k p_i b_j(x) \in \Sigma[x],
    \end{aligned}
\end{equation}
where the decision variables $p_1,\ldots,p_k\in \mathbb{R}$,  constants $a_1,\ldots,a_k\in \mathbb{R}$ are the weights, and $b_0,\ldots, b_k\in \mathcal{R}[x]$ are given polynomials. This SOS program is a convex optimization problem and can be equivalently transformed into a semi-definite program (SDP). Interested readers are referred to \cite{prajna2002introducing,arcak2016networks} for more details.

\subsection{Problem formulation}
In this work, we aim to numerically verify the safety property of interconnected systems. The following sub-problems are considered:
\begin{enumerate}
    \item[(P1)] For a continuous-time subsystem $ G_i = ( U_i, W_i, X_i, Y_i, X^0_i, \mathcal{T}_i) $ and a given safe region $\myset{Q}_i\subseteq X_i$, construct an invariance assume-guarantee contract $C_i = ( I_{\underline{W}_i}, I_{\underline{X}_i}, I_{\underline{Y}_i})$ such that $G_i \models C_i$ and $X_i^0\subseteq \underline{X}_i\subseteq \myset{Q}_i$;
    \item[(P2)] For an interconnected system $\langle (G_i)_{i\in \myset{I}}, \mathcal{E}\rangle=(U, \{0\},X, Y, X^0, \mathcal{T}) $ and a safe region $\myset{Q} = \Pi_{i\in \myset{I}}\myset{Q}_i, $ $\myset{Q}_i \subseteq X_i$, construct an invariance contract $C = ( \{0\}, I_{\underline{X}}, I_{\underline{Y}})$ such that $G \models C$ and $X^0\subseteq \underline{X}\subseteq \myset{Q}$.
\end{enumerate}
If such a $\underline{X}$ is found, then safety of the interconnected system is certified.

\begin{assumption} \label{ass:ass1}
We assume the following:
\begin{enumerate}
    \item the local feedback law $u_i = k_i(x_i,w_i) \in U_i$ is known, but it does not necessarily render the interconnected system safe;
    \item The class $\mathcal{K}$ function $\alpha(\cdot)$ in \eqref{eq:cbf_condition} is chosen to be a linear function with constant gain $a$.
    \item The initial set $X_i^0$, safe region $\myset{Q}_i$, and the internal input set $W_i$ are super-level sets of (possibly vector-valued) differentiable functions, i.e., $X_i^0 = \{x_i: b_i^0(x_i)\geq 0\}, \myset{Q}_i =\{x_i: q_i(x_i)\geq 0\}, W_i =\{(\myvar{y}_{j_1}, \myvar{y}_{j_2}, \ldots, \myvar{y}_{j_p}): d^i_{j_k}(y_{j_k})\geq 0, k = 1,2,\ldots, p\}, \text{where } N(i) = \{j_1, j_2, \ldots, j_p\} $.
    \item  $b_i^0,q_i\in \mathcal{R}[x_i], d^i_{j_k}(y_{j_k}) \in \mathcal{R}[y_{j_k}], f_i, g_i,k_i \in \mathcal{R}[x_i,w_i]$ are polynomials.
    \item The subsets of $W_i, \myset{Q}_i$, i.e., $\underline{W}_i, \underline{\myset{Q}}_i$ are chosen in the form of
\begin{equation*}
\begin{aligned}
& \underline{\myset{Q}}_i = \{x_i: q_i(x_i)\geq \zeta \bm{1} \text{ for some } \zeta\geq 0 \},  \\
 &    \underline{W}_i = \{ (\myvar{y}_{j_1},  \ldots , \myvar{y}_{j_p}): d_{j_k}^i(y_{j_k})\geq \delta \bm{1} \text{ for some } \delta\geq 0 \}.   
\end{aligned}
\end{equation*} 
\item When searching for non-negative polynomials, we restrict the search to the set of SOS polynomials up to a certain degree.
\end{enumerate}

\end{assumption}
These restrictions, even though conservative, will facilitate the convergence and completeness analysis that will be clear later. We note that the first two assumptions are in place to avoid bilinear terms when constructing the SOS programs. Both can be relaxed by considering iterative optimization approaches. See \cite{wang2023safety} for more details. Assumptions \ref{ass:ass1}.3 and \ref{ass:ass1}.5 help to parameterize the assumption and the guarantee sets by scalar variables $\delta$ and $\zeta$, respectively. Assumptions \ref{ass:ass1}.4 and \ref{ass:ass1}.6 are standard in the field of SOS-based system verification.

For notational simplicity, we define the set projection of an internal input set $W_{i}$ of subsystem $G_i$ with respect to subsystem $G_k$ as $\text{Proj}_{k}(W_{i}) = \{ y_{k}:d^i_{k}(y_{k})\geq 0\} $ if $k \in N(i)$, and $\text{Proj}_{k}(W_{i}) = \emptyset$ otherwise.  For a mapping $o: X \to Y$, let $o^{-1}(\underline{Y}) := \{x: o(x)\in \underline{Y}\}$.

\section{Proposed solutions} \label{sec:solution}
The proposed approach consists of 1) numerically constructing iAGCs for subsystems by synthesizing local (control) barrier functions, and 2) negotiating iAGCs among subsystems to certify the safety property of the interconnected system. We also discuss the convergence properties of our approach.

\subsection{Local barrier function and AGC construction}
In this subsection, we will focus on tackling Problem (P1) for a subsystem $G_i = ( U_i, W_i, X_i, Y_i, X^0_i, \mathcal{T}_i) $. Under Assumption \ref{ass:ass1}, the closed-loop subsystem dynamics of \eqref{eq:system_dynamics} are 
\begin{equation} \label{eq:local_dyn}
     \dot{\myvar{x}}_i (t)= f_i(\myvar{x}_i,\myvar{w}_i) + g_i(\myvar{x}_i,\myvar{w}_i) k_i(\myvar{x}_i,\myvar{w}_i):=F_i(\myvar{x}_i,\myvar{w}_i).
\end{equation}

In this subsection, for the sake of notation simplicity, we will drop the subscript $i$ when no confusion arises.

First we show the relations between a) finding a control barrier function, b) constructing an invariance assume-guarantee contract, and c) establishing the safety property of a subsystem.
\begin{proposition} \label{prop:cbf_as_iAGC}
    Consider a continuous-time system $G = ( U, W, X, Y, X^0, \mathcal{T}) $,  a safe region $\myset{Q}$ and an internal input set $\underline{W}\subseteq W$. Consider the following claims:
    \begin{enumerate}
        \item[\circled{1}]  there exists a CBF $h$ with respect to the internal input set $\underline{W}$. Denote by $\myset{C} = \{x: h(x)\geq 0\}$;
        \item[\circled{2}]  the system $G\models C$, where $C = ( I_{\underline{W}}, I_{\myset{C}}, I_{o(\myset{C})})$;
        \item[\circled{3}] $X^0 \subseteq \myset{C}\subseteq \myset{Q}$;
        \item[\circled{4}] the system is safe with respect to $\underline{W}$;
    \end{enumerate}
    We have \circled{1} $\implies $ \circled{2};   \circled{2} and \circled{3} $\implies $  \circled{4}.
\end{proposition}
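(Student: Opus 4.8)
The plan is to read both implications as consequences of one fact: the CBF inequality \eqref{eq:cbf_condition} certifies forward invariance of the super-level set $\myset{C}=\{x: h(x)\geq 0\}$ under a suitable feedback. First I would fix the feedback $k(\cdot,\cdot):X\times W\to U$ that realizes the inequality in Definition \ref{def:cbf}: writing the closed-loop vector field as $F(x,w)=f(x,w)+g(x,w)k(x,w)$, this gives, for every $w\in\underline{W}$ and every $x$, the differential inequality $\nabla h(x)F(x,w)+\alpha(h(x))\geq 0$. Under Assumption \ref{ass:ass1}.1 this $k$ is simply the given local feedback; otherwise one selects it pointwise from the admissible inputs. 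In either case I would invoke the local Lipschitzness of $f,g,k$ (assumed for well-posedness) to guarantee existence and uniqueness of the resulting closed-loop trajectory.

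For \circled{1} $\implies$ \circled{2}, I would argue along any closed-loop trajectory with $w|_{[0,t]}\in I_{\underline{W}}$ and $x(0)\in\myset{C}$. Along such a trajectory $\dot h=\nabla h(x)F(x,w)\geq -\alpha(h(x))$, so by the comparison lemma (equivalently, Nagumo's invariance theorem for $\{h\geq 0\}$, using that $\alpha$ is class $\mathcal{K}$ with $\alpha(0)=0$) the bound $h(x(0))\geq 0$ propagates to $h(x(s))\geq 0$ for all $s\le t$; hence $x|_{[0,t]}\in I_{\myset{C}}$. The output guarantee is immediate: since $y=o(x)$, the membership $x(s)\in\myset{C}$ forces $y(s)=o(x(s))\in o(\myset{C})$, so $y|_{[0,t]}\in I_{o(\myset{C})}$. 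With $\underline{X}=\myset{C}$ and $\underline{Y}=o(\myset{C})$ this is exactly the invariance guarantee of Definition \ref{def:iAGC}, and therefore $G\models C$.

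For \circled{2} and \circled{3} $\implies$ \circled{4}, I would chain the invariance from \circled{2} with the inclusions of \circled{3}. Fix any $x_0\in X^0$ and any $w|_{[0,t]}\in I_{\underline{W}}$. Since $X^0\subseteq\myset{C}$ we have $x_0\in\myset{C}$, so the feedback $k$ furnished by \circled{2} produces a trajectory with external input $u|_{[0,t]}\in I_U$ given by $u(s)=k(x(s),w(s))$ and satisfying $x|_{[0,t]}\in I_{\myset{C}}$; combined with $\myset{C}\subseteq\myset{Q}$ this yields $x|_{[0,t]}\in I_{\myset{Q}}$. Together with $X^0\subseteq\myset{C}\subseteq\myset{Q}$, all three requirements of Definition \ref{def:system_safety} are met, and the system is safe with respect to $\underline{W}$.

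The main obstacle is the invariance step in the first implication: passing rigorously from the pointwise CBF inequality to set invariance $h(x(t))\geq 0$ requires the comparison/Nagumo argument, and for a general class $\mathcal{K}$ function $\alpha$ one must check that solutions of the comparison equation $\dot c=-\alpha(c)$ stay nonnegative when started at $c(0)\geq 0$. Here Assumption \ref{ass:ass1}.2 (linear $\alpha$ with gain $a$) makes this routine, since $h(x(t))\geq h(x(0))e^{-at}\geq 0$. A secondary point is reconciling Definition \ref{def:iAGC} with the invariance reading used above --- namely that contract satisfaction is quantified over trajectories initialized in the guarantee set $\myset{C}$ --- and ensuring that the chosen feedback selection yields well-defined closed-loop solutions.
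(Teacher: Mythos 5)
Your proposal is correct and follows essentially the same route as the paper's own proof: the first implication is established by showing that the CBF condition renders $\myset{C}$ forward invariant under a suitable feedback whenever $w|_{[0,t]} \in I_{\underline{W}}$ (the paper states this in one line; you supply the comparison-lemma details), and the second implication is a direct chaining of Definitions \ref{def:iAGC} and \ref{def:system_safety} with the inclusions $X^0 \subseteq \myset{C} \subseteq \myset{Q}$. The only difference is one of detail, not of approach; your closing remarks on the comparison argument and on trajectories being initialized in $\myset{C}$ make explicit what the paper leaves implicit.
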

\begin{proof}
     \circled{1} $\implies $ \circled{2}: \circled{1} implies that the set $\myset{C}$ is forward invariant when the signal $w(t)\in I_{\underline{W}}$, which implies \circled{2} from Definition \ref{def:iAGC}.    \circled{2} and \circled{3} $\implies $  \circled{4}: This is straightforward according to Definition \ref{def:system_safety}.
\end{proof}

Numerically, one can formulate the conditions of \circled{1} and \circled{3} of Proposition \ref{prop:cbf_as_iAGC} as a set of SOS constraints, as follows.
\begin{proposition} \label{prop:local_constract}
    Consider a continuous-time system $G= (U, W, X, Y, X^0, \mathcal{T})$ and a safe region $\myset{Q}$. If there exist SOS polynomials $\sigma_{init},\sigma_{safe}\in \Sigma[\myvar{x}]$, $ \sigma_{k} \in \Sigma[x,\myvar{y}_{k}], k = 1,2,\ldots,p$, polynomial $h\in \mathcal{R}(\myvar{x}) $, and  positive $\epsilon, a, \delta $ such that
    \begin{subequations} \label{eq:local_sos}
    \begin{align}
        h(x) - \sigma_{init} b^0(x) & \in \Sigma[x]; \label{eq:sos_initialset} \\
        -h(x) + \sigma_{safe}q(x) & \in \Sigma[x]; \label{eq:sos_safeset} \\
         \nabla h(\myvar{x}) F(\myvar{x},y_1,\ldots,y_p) + ah(\myvar{x}) \hspace{0.3cm}  & \notag \\
         - \sum_{k= 1}^p \sigma_{k} (d_k(y_k)  - \delta) -\epsilon & \in \Sigma[\myvar{x}, y_1,\ldots,y_p]. \label{eq:sos_cbf_condiiton}
    \end{align}   
    \end{subequations}
    then, letting $\underline{W} = \{ (\myvar{y}_{1}, \ldots,\myvar{y}_{k} \ldots, \myvar{y}_{p}): d_{k}(y_{k})\geq \delta \}$,   \circled{1}, \circled{2}, \circled{3}, \circled{4} in Proposition \ref{prop:cbf_as_iAGC} hold.
\end{proposition}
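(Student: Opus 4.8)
The plan is to reduce the statement to Proposition \ref{prop:cbf_as_iAGC}: the three SOS certificates will be used to establish claims \circled{1} and \circled{3} directly, after which \circled{2} and \circled{4} follow from the implications \circled{1} $\implies$ \circled{2} and \circled{2}, \circled{3} $\implies$ \circled{4} already proved there. The only analytical tool needed is that a sum-of-squares polynomial is globally nonnegative, combined with the sign information carried by the defining inequalities of $X^0$, $\myset{Q}$, and $\underline{W}$ (an S-procedure argument).

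First I would establish \circled{3}. For $X^0\subseteq\myset{C}$, nonnegativity of the SOS polynomial in \eqref{eq:sos_initialset} gives $h(x)\geq \sigma_{init}(x)\,b^0(x)$ for all $x$; on $X^0=\{x:b^0(x)\geq 0\}$ the product $\sigma_{init}(x)\,b^0(x)$ is nonnegative because $\sigma_{init}$ is SOS, whence $h(x)\geq 0$ and $x\in\myset{C}$. For $\myset{C}\subseteq\myset{Q}$ I would argue symmetrically from \eqref{eq:sos_safeset}: nonnegativity yields $\sigma_{safe}(x)\,q(x)\geq h(x)$, so on $\myset{C}$, where $h(x)\geq 0$, one gets $\sigma_{safe}(x)\,q(x)\geq 0$, and since $\sigma_{safe}$ is nonnegative this forces $q(x)\geq 0$, i.e. $x\in\myset{Q}$. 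Together these give $X^0\subseteq\myset{C}\subseteq\myset{Q}$.

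Next I would establish \circled{1}, namely that $h$ is a CBF with respect to $\underline{W}=\{(y_1,\ldots,y_p): d_k(y_k)\geq\delta\}$. The key is that $F=f+g\,k$ is the closed-loop field under the \emph{known} feedback $k$ (Assumption \ref{ass:ass1}.1) and that $\alpha(\cdot)=a(\cdot)$ is linear (Assumption \ref{ass:ass1}.2), so that the CBF inequality \eqref{eq:cbf_condition} evaluated at $u=k(x,w)$ is exactly $\nabla h(x)F(x,w)+a\,h(x)\geq 0$. From \eqref{eq:sos_cbf_condiiton}, nonnegativity gives, for every $(x,y_1,\ldots,y_p)$, the bound $\nabla h(x)F(x,y)+a\,h(x)\geq \sum_{k=1}^p \sigma_k\,(d_k(y_k)-\delta)+\epsilon$; restricting to $w=(y_1,\ldots,y_p)\in\underline{W}$, each factor $d_k(y_k)-\delta\geq 0$ and each $\sigma_k$ is SOS, so the sum is nonnegative and $\nabla h(x)F(x,w)+a\,h(x)\geq\epsilon>0$. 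Thus for every $w\in\underline{W}$ the admissible choice $u=k(x,w)\in U$ verifies \eqref{eq:cbf_condition}, which is \circled{1}. Invoking Proposition \ref{prop:cbf_as_iAGC} then delivers \circled{2} and \circled{4}.

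The main obstacle is the boundary technicality in $\myset{C}\subseteq\myset{Q}$: $\sigma_{safe}(x)\,q(x)\geq 0$ only forces $q(x)\geq 0$ where $\sigma_{safe}(x)>0$, and at points of $\partial\myset{C}$ where the multiplier vanishes one must argue separately (by continuity, by the strict margin carried by $\epsilon>0$, or via the parameterized tightening $\underline{\myset{Q}}$ of Assumption \ref{ass:ass1}.5) to close the inclusion; when $q$ is vector-valued this multiplier is read componentwise. The strict positivity contributed by $\epsilon$ in \eqref{eq:sos_cbf_condiiton} is precisely what makes the invariance robust, and is the reason \circled{1} holds with a margin rather than merely with equality.
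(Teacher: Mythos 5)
Your proposal is correct and follows essentially the same route as the paper's proof: use non-negativity of the SOS certificates to establish \circled{1} and \circled{3}, then invoke Proposition \ref{prop:cbf_as_iAGC} to obtain \circled{2} and \circled{4}. The one point where you go beyond the paper is the boundary technicality you flag in $\myset{C}\subseteq\myset{Q}$ --- the paper's proof simply asserts that \eqref{eq:sos_safeset} gives $h(x)\geq 0 \implies q(x)\geq 0$, ignoring points of $\{x: h(x)=0\}$ where $\sigma_{safe}$ vanishes --- and among the fixes you list, the $\epsilon$-margin is the one that actually closes it: for any fixed $w\in\underline{W}$ (assuming $\underline{W}\neq\emptyset$), \eqref{eq:sos_cbf_condiiton} gives $\nabla h(x)F(x,w)\geq \epsilon>0$ at every $x$ with $h(x)=0$, so $\nabla h\neq 0$ on that level set, hence $\{h\geq 0\}=\mathrm{cl}(\{h>0\})$, and since \eqref{eq:sos_safeset} yields the strict implication $h(x)>0\implies q(x)> 0$ and $\{x: q(x)\geq 0\}$ is closed, the inclusion $\myset{C}\subseteq\myset{Q}$ follows.
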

\begin{proof}
    \eqref{eq:sos_cbf_condiiton} is a SOS polynomial and thus $\nabla h(\myvar{x}) F(\myvar{x},\myvar{w}) + ah(\myvar{x})   \geq 0, \forall w\in \underline{W}$. This shows that claim \circled{1} holds. Based on non-negativeness of SOS polynomials, \eqref{eq:sos_initialset} and \eqref{eq:sos_safeset} imply that $\forall x, b^0(x)\geq 0 \implies h(x)\geq 0$,  and $\forall x, h(x)\geq 0 \implies q(x)\geq 0$, respectively. That is, $X_i^0 \subseteq \myset{C} \subseteq \myset{Q}$. Thus  \circled{3} holds. Following Proposition \ref{prop:cbf_as_iAGC}, we conclude the proof.
\end{proof}

 Even though \eqref{eq:local_sos} is only a sufficient condition for system safety, it is a condition we can verify numerically (and efficiently when the system size is small). For this reason, we say that $G$ is \emph{certified to be safe} in $\myset{Q}$ w.r.t. $\underline{W}$ if condition \eqref{eq:local_sos} holds. We introduce the following special sets that are useful for contract composition later. In what follows, we take $0<\epsilon<<1$ and $ a $ in \eqref{eq:local_sos} to be positive constants.

\subsubsection{Maximal internal input set} 
To quantify the largest internal input set a subsystem can tolerate while still remaining safe, we propose the following optimization problem:
\begin{equation} \label{eq:maximal internal input set}
    \begin{aligned}
        & \min \delta \\
       \text{s.t. }  & \eqref{eq:sos_initialset}, \eqref{eq:sos_safeset}, \eqref{eq:sos_cbf_condiiton}, \delta\geq 0,
    \end{aligned}
\end{equation}
where the decision variables include SOS polynomials $\sigma_{init},\sigma_{safe}\in \Sigma[\myvar{x}]$,  $ \sigma_{k} \in \Sigma[\myvar{y}_{k}], k = 1,2,\ldots,p$, polynomials $h\in \mathcal{R}(\myvar{x}) $, and a scalar $\delta$. It should be noted that although \eqref{eq:maximal internal input set} contains a bilinear term $\sigma_{input} \delta$, this can be solved efficiently by bisection as $\delta$ is a scalar. If \eqref{eq:maximal internal input set} is feasible, denote the optimal value by $\delta^\star$ and the corresponding internal input set $\underline{W}^\star$. We call $\underline{W}^\star$ the \emph{maximal internal input set} for a given subsystem $G$ and safe region $\myset{Q}$.

\subsubsection{Minimal safe region}
Given a subsystem $G$ with an internal input set $\underline{W}$, to quantify the least impact on its child subsystem, we propose the following optimization problem:
\begin{equation} \label{eq:minimal safe set}
    \begin{aligned}
        & \max \zeta \\
       \text{s.t. }  & \eqref{eq:sos_initialset}, \eqref{eq:sos_cbf_condiiton}, \zeta\geq 0 \\
       & -h(x) + \sigma_{safe}(q(x) -\zeta) \in \Sigma[x];
    \end{aligned}
\end{equation}
where the decision variables include SOS polynomials $\sigma_{init},\sigma_{safe}\in \Sigma[\myvar{x}]$, $ \sigma_{k} \in \Sigma[\myvar{y}_{k}], k = 1,2,\ldots,p$, polynomials $h\in \mathcal{R}(\myvar{x}) $, and a scalar $\zeta$. We take $\epsilon, a$ to be positive constants. $\delta$ in \eqref{eq:sos_cbf_condiiton} is known as we assume $\underline{W}$ is given. It should be noted that although \eqref{eq:minimal safe set} contains a bilinear term $\sigma_{safe} \zeta$, this can be solved efficiently by bisection as $\zeta$ is a scalar. If \eqref{eq:minimal safe set} is feasible, denote the optimal value by $\zeta^\star$ and the corresponding safe region $\underline{\myset{Q}}^\star$. We call $\underline{\myset{Q}}^\star$ the \emph{minimal safe region} for a given $\underline{W}$.

We have the following properties about the maximal internal input set $\underline{W}^\star$ and the corresponding minimal safe region $\underline{\myset{Q}}^\star$. 
\begin{proposition} \label{prop:local_sets}
    Under Assumption \ref{ass:ass1}, for  a continuous-time system $G= (U, W, X, Y, X^0, \mathcal{T})$ and a safe region $\myset{Q}$, the following results hold:
    \begin{enumerate}
        \item If \eqref{eq:maximal internal input set} is feasible for some $\delta^\prime\geq 0$, then \eqref{eq:maximal internal input set} is also feasible for $\delta'', \delta'' \geq  \delta^\prime$. If \eqref{eq:minimal safe set} is feasible for some $\zeta^\prime >0$, then \eqref{eq:minimal safe set} is also feasible for $\zeta'', 0\leq \zeta'' \leq \zeta^\prime$.
        \item Consider two safe regions $\underline{\myset{Q}}^\prime \subseteq \underline{\myset{Q}}'' \subseteq \myset{Q} $. If \eqref{eq:maximal internal input set} is feasible for the safe region $\underline{\myset{Q}}^\prime $, then \eqref{eq:maximal internal input set} is also feasible for  $\underline{\myset{Q}}'' $. Denoting the respective optimal values by $\delta^\prime, \delta''$ and the corresponding internal input sets $\underline{\myset{W}}^\prime, \underline{\myset{W}}'' $, then $\delta'' \leq  \delta^\prime$ and $\underline{\myset{W}}^\prime \subseteq \underline{\myset{W}}'' \subseteq \myset{W} $.
        \item Consider two internal input sets $\underline{\myset{W}}^\prime \subseteq \underline{\myset{W}}'' \subseteq \myset{W} $. If \eqref{eq:minimal safe set} is feasible with the internal input set $\underline{\myset{W}}'' $, then \eqref{eq:minimal safe set} is also feasible for  $\underline{\myset{W}}^\prime $. Denoting the respective optimal values by $\zeta^\prime, \zeta''$ and the corresponding safe regions $\underline{\myset{Q}}^\prime, \underline{\myset{Q}}'' $, then $  0 \leq  \zeta'' \leq \zeta^\prime $ and  $ \underline{\myset{Q}}^\prime\subseteq  \underline{\myset{Q}}'' \subseteq \myset{Q} $.
        \item If \eqref{eq:maximal internal input set} is feasible, then $\underline{W}^\star$ is the largest internal input set w.r.t. which $G$ is certified to be safe; if infeasible, then there exists no $\underline{W}\subseteq W$ w.r.t. which  $G$ can be certified to be safe.
        \item If \eqref{eq:maximal internal input set} is feasible, letting $\underline{W} = \underline{W}^\star$, then \eqref{eq:minimal safe set} is feasible and $\underline{\myset{Q}}^\star$ is the smallest safe region in which  $G$ is safe  w.r.t.  $\underline{W}^\star$.
    \end{enumerate}
\end{proposition}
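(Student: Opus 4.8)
The backbone of all five parts is a single \emph{slack-absorption} observation about the constraints \eqref{eq:sos_cbf_condiiton} and the safe-set constraint: if $p-\sum_{k}\sigma_{k}(d_{k}-\delta)\in\Sigma$ holds with SOS multipliers, then for any $\delta''\ge\delta$ the expression $p-\sum_{k}\sigma_{k}(d_{k}-\delta'')=[p-\sum_{k}\sigma_{k}(d_{k}-\delta)]+(\delta''-\delta)\sum_{k}\sigma_{k}$ is again a sum of an SOS polynomial and the SOS term $(\delta''-\delta)\sum_{k}\sigma_{k}$, hence SOS; symmetrically, if $-h+\sigma_{safe}(q-\zeta)\in\Sigma$ then for any $\zeta''\le\zeta$ the polynomial $-h+\sigma_{safe}(q-\zeta'')=[-h+\sigma_{safe}(q-\zeta)]+\sigma_{safe}(\zeta-\zeta'')$ stays SOS. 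I would first prove Part~1 directly this way: reusing the same $h,\sigma_{init},\sigma_{safe},\sigma_{k}$ from a feasible solution, feasibility of \eqref{eq:maximal internal input set} at $\delta'$ transfers to every $\delta''\ge\delta'$, and feasibility of \eqref{eq:minimal safe set} at $\zeta'$ transfers to every $\zeta''\in[0,\zeta']$. This shows the feasible sets of the scalars $\delta$ and $\zeta$ are a half-line and a segment, respectively, which is exactly the interval structure the later extremality claims rely on.

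Parts~2 and~3 are the analogous monotonicities in the \emph{other} parameter. For Part~2 I would invoke Assumption~\ref{ass:ass1} so that the two safe regions are $\underline{\myset{Q}}^\prime=\{q\ge\zeta'\}$ and $\underline{\myset{Q}}''=\{q\ge\zeta''\}$ with $\underline{\myset{Q}}^\prime\subseteq\underline{\myset{Q}}''$ forcing $\zeta'\ge\zeta''$; the only constraint touching the safe region is \eqref{eq:sos_safeset}, so by the slack trick any $\underline{\myset{Q}}^\prime$-feasible tuple (including its $\delta$) is $\underline{\myset{Q}}''$-feasible. Since \eqref{eq:maximal internal input set} minimizes $\delta$ over a feasible set that only grows when the safe region enlarges, $\delta''\le\delta'$; and $\delta''\le\delta'$ translates, through $\underline{W}=\{d_{k}\ge\delta\}$, into $\underline{\myset{W}}^\prime\subseteq\underline{\myset{W}}''\subseteq\myset{W}$. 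Part~3 is the mirror image: a larger input set corresponds to a smaller $\delta$ in \eqref{eq:sos_cbf_condiiton}, the slack trick moves feasibility from $\underline{\myset{W}}''$ to the smaller $\underline{\myset{W}}^\prime$ (adding $\sum_{k}\sigma_{k}(\delta'-\delta'')$), and since \eqref{eq:minimal safe set} maximizes $\zeta$ over a feasible set that only grows as the input set shrinks, $\zeta''\le\zeta'$, giving $\underline{\myset{Q}}^\prime\subseteq\underline{\myset{Q}}''\subseteq\myset{Q}$.

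For Part~4, feasibility of \eqref{eq:maximal internal input set} returns the minimal feasible $\delta^\star$, hence the largest set $\underline{W}^\star=\{d_{k}\ge\delta^\star\}$; by Part~1 every $\delta\ge\delta^\star$ is feasible and every $\delta<\delta^\star$ is infeasible, and by Assumption~\ref{ass:ass1} the only admissible input sets are of this one-parameter form, so $\underline{W}^\star$ is precisely the largest $\underline{W}$ for which \eqref{eq:local_sos} holds, i.e.\ the largest w.r.t.\ which $G$ is certified safe; infeasibility of \eqref{eq:maximal internal input set} means no $\delta\ge0$ satisfies \eqref{eq:local_sos}, i.e.\ no $\underline{W}$ certifies safety. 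For Part~5, with $\underline{W}=\underline{W}^\star$ the tuple attaining $\delta^\star$ already satisfies \eqref{eq:sos_initialset}, \eqref{eq:sos_cbf_condiiton}, and \eqref{eq:sos_safeset}; since \eqref{eq:sos_safeset} is exactly the $\zeta=0$ instance of the extra constraint in \eqref{eq:minimal safe set}, that problem is feasible with optimum $\zeta^\star\ge0$, and by Part~1 the feasible $\zeta$ form $[0,\zeta^\star]$, so $\underline{\myset{Q}}^\star=\{q\ge\zeta^\star\}$ is the smallest certified safe region w.r.t.\ $\underline{W}^\star$.

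The per-step computations are routine once the slack-absorption identity is in place; the real care is in the extremality statements (Parts~4 and~5 and the inclusion tails of Parts~2 and~3). There I must combine three ingredients so that a totally ordered family actually possesses a largest/smallest certified member: the interval structure of the feasible scalar set from Part~1, optimality of $\delta^\star$ or $\zeta^\star$ to exclude any strictly larger/smaller certified set, and Assumption~\ref{ass:ass1}'s restriction that all admissible $\underline{W},\underline{\myset{Q}}$ come from the one-parameter families $\{d_{k}\ge\delta\}$ and $\{q\ge\zeta\}$. Keeping straight the \emph{inverse} monotonicity (larger $\delta$ $\leftrightarrow$ smaller $\underline{W}$, larger $\zeta$ $\leftrightarrow$ smaller $\underline{\myset{Q}}$) is the main place an argument could silently run in the wrong direction, so I would state that correspondence explicitly before using it.
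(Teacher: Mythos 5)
Your proposal is correct and follows essentially the same route as the paper's proof: reuse the feasible decision variables across parameter values, then combine the min/max direction of \eqref{eq:maximal internal input set} and \eqref{eq:minimal safe set} with the inverse correspondence between the scalars $\delta,\zeta$ and the set inclusions to get the monotonicity and extremality claims. The only difference is presentational: where the paper states ``one verifies that any feasible decision variables remain feasible,'' you spell out the underlying slack-absorption identity (adding $(\delta''-\delta')\sum_k\sigma_k$ or $\sigma_{safe}(\zeta'-\zeta'')$, which are SOS), which is a welcome clarification rather than a deviation.
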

\begin{proof}
    Claim 1) holds since one verifies that, when $\delta'' \geq  \delta^\prime$, any feasible decision variables $\sigma_{init},\sigma_{safe},\sigma_{k}, k = 1,2,\ldots,p, h$ to \eqref{eq:maximal internal input set} for $\delta^\prime$ are also feasible for the case of $ \delta''$. Similarly, when $0\leq \zeta'' \leq \zeta^\prime$, any feasible decision variables to \eqref{eq:minimal safe set} for $\zeta^\prime$ are also feasible for the case of $\zeta''$.

   For $\underline{\myset{Q}}^\prime \subseteq \underline{\myset{Q}}''$ (the corresponding $\zeta^\prime \geq  \zeta'' \geq 0$), if \eqref{eq:maximal internal input set} is feasible for $\zeta^\prime$, then the feasible decision variables are also feasible for the case of $ \zeta''$. As \eqref{eq:maximal internal input set} minimizes over $\delta$ and the feasible set of the case $\zeta^\prime $ is a subset of that of $\zeta'' $, then  $\delta'' \leq \delta^\prime$, proving Claim 2). 

    Similar argument applies for Claim 3).   
    For $\underline{\myset{W}}^\prime \subseteq \underline{\myset{W}}'' \subseteq \myset{W} $ (with the corresponding $\delta^\prime \geq \delta''\geq 0$),  if \eqref{eq:minimal safe set} is feasible for $\delta''$, then the feasible decision variables are also feasible for the case of $ \delta^\prime$. As \eqref{eq:minimal safe set} maximizes over $\zeta$ and the feasible set of the case $\delta'' $ is a subset of that of $\delta^\prime $, then  $\zeta^\prime \geq  \zeta''$, proving Claim 3). 
    
    Claim 4) is true since $  \underline{W}'' \subseteq \underline{W}^\prime  $ if and only if the corresponding $\delta''  \geq \delta^\prime $, and the program in \eqref{eq:maximal internal input set} minimizes $\delta$. We note that the condition $\delta\geq 0$ comes from the condition $\underline{W}\subseteq W$. If \eqref{eq:maximal internal input set} is feasible, then the feasible decision variables for $\delta = \delta^\star$ are also feasible for \eqref{eq:minimal safe set} when $\zeta = 0$. Further noting that  $  \underline{Q}'' \subseteq \underline{Q}^\prime  $ if and only if the corresponding $\zeta''  \geq \zeta^\prime $ and \eqref{eq:minimal safe set} maximizes $\zeta$, thus  Claim 5) is deduced.
\end{proof}
Proposition \ref{prop:local_sets}'s items 2 and 3 show a monotonic relation between the internal input sets and the safe regions.  Intuitively, with a larger safe region, the system can tolerate a larger disturbance (internal input set); with a larger disturbance (internal input set), the most confined safe region will become larger. Proposition \ref{prop:local_sets}'s items 4 and 5 further state that,  
for a given safe region, $\underline{W}^\star$ is the largest internal input set that a system can bear while remaining safe; for a given internal input set, $\underline{\myset{Q}}^\star$ is the most confined influence a system has for its child subsystems. When \eqref{eq:maximal internal input set} and \eqref{eq:minimal safe set} are feasible for a subsystem $G_i$, denoting the corresponding sets as $\underline{W}_i^\star, \underline{\myset{Q}}_i^\star$, then we construct a local contract $C_i = (  I_{\underline{W}_i}, I_{\underline{X}_i},  I_{\underline{Y}_i} )$, where $\underline{W}_i = \underline{W}_i^\star,\underline{X}_i = \underline{\myset{Q}}_i^\star$, and $\underline{Y}_i = o_i(\underline{X}_i)$.

\subsection{Contract composition and negotiation}
In this section, we consider the interconnected system $G = \langle (G_i)_{i\in \myset{I}}, \mathcal{E}\rangle$,  $G_i = ( U_i, W_i, X_i, Y_i, X^0_i, \mathcal{T}_i) $ with safe region $\myset{Q}_i\subseteq X_i$. We have the following results on the safety properties of the interconnected system.
\begin{proposition} \label{prop:iAGC_composition}
    If, for each subsystem $G_i $,  an iAGC $C_i =  (  I_{\underline{W}_i}, I_{\underline{X}_i},  I_{\underline{Y}_i} )$ exists such that $ X_i^0 \subseteq \underline{X}_i \subseteq \myset{Q}_i$ and
    \begin{equation} \label{eq:agreeing_condition}
        \Pi_{j\in N(i)} \underline{Y}_j \subseteq \underline{W}_i,
    \end{equation}
    then the interconnected system $\langle (G_i)_{i\in \myset{I}}, \mathcal{E}\rangle$ is safe.  
\end{proposition}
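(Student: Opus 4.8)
The plan is to reduce the claim to the compositional reasoning result of Lemma~\ref{lem:compositional_reasoning}, by first converting the set-level hypothesis \eqref{eq:agreeing_condition} into the trajectory-level hypothesis that Lemma~\ref{lem:compositional_reasoning} actually requires, and then reading off safety of the composed system from the resulting guarantee. Throughout I adopt the standing reading that each local contract is valid for its subsystem, i.e. $G_i \models C_i$; this is exactly how the $C_i$ are produced through \eqref{eq:maximal internal input set}--\eqref{eq:minimal safe set} together with Proposition~\ref{prop:local_constract}.

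The key observation I would establish first is that the operator $I_{(\cdot)}$ commutes with Cartesian products and is monotone under inclusion. Indeed, a tuple of maps $(z_{j})_{j\in N(i)}$ lies in $\Pi_{j\in N(i)} I_{\underline{Y}_j}$ precisely when $z_j(t)\in \underline{Y}_j$ for every $j$ and every $t$, which is the same as $(z_j(t))_{j\in N(i)} \in \Pi_{j\in N(i)}\underline{Y}_j$ for all $t$; hence $\Pi_{j\in N(i)} I_{\underline{Y}_j} = I_{\Pi_{j\in N(i)} \underline{Y}_j}$, and clearly $S\subseteq S'$ gives $I_S\subseteq I_{S'}$. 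Applying both facts to \eqref{eq:agreeing_condition} yields
\begin{equation*}
  \Pi_{j\in N(i)} I_{\underline{Y}_j} = I_{\Pi_{j\in N(i)}\underline{Y}_j} \subseteq I_{\underline{W}_i},
\end{equation*}
which is exactly the composition condition in Lemma~\ref{lem:compositional_reasoning}.

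With $G_i \models C_i$ for every $i$ and this inclusion in hand, Lemma~\ref{lem:compositional_reasoning} gives $\langle (G_i)_{i\in \myset{I}}, \mathcal{E}\rangle \models C$ with $C = (\{0\}, \Pi_{i\in \myset{I}} I_{\underline{X}_i}, \Pi_{i\in \myset{I}} I_{\underline{Y}_i})$. Using the same product identity, $\Pi_{i\in \myset{I}} I_{\underline{X}_i} = I_{\underline{X}}$ with $\underline{X} := \Pi_{i\in \myset{I}} \underline{X}_i$. Next I would take Cartesian products of the componentwise inclusions $X_i^0 \subseteq \underline{X}_i \subseteq \myset{Q}_i$ to obtain $X^0 = \Pi_{i\in \myset{I}} X_i^0 \subseteq \underline{X} \subseteq \Pi_{i\in \myset{I}} \myset{Q}_i = \myset{Q}$.

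It then remains to turn the guarantee $G\models C$ into safety in the sense of Definition~\ref{def:system_safety}. Since the composed system has internal input set $\{0\}$, the only admissible internal input signal is the zero signal, so the quantifier over $w|_{[0,t]}$ is satisfied automatically; by $G\models C$ there is a feedback rendering every such trajectory to satisfy $x|_{[0,t]}\in I_{\underline{X}}\subseteq I_{\myset{Q}}$, and $X^0\subseteq\myset{Q}$ was just shown. This is precisely the composed-system instance of the implication \circled{2} and \circled{3} $\implies$ \circled{4} of Proposition~\ref{prop:cbf_as_iAGC}, so the interconnected system is safe. The only non-bookkeeping step is the first one — recognizing that \eqref{eq:agreeing_condition}, stated on sets, is equivalent to the trajectory-set inclusion demanded by Lemma~\ref{lem:compositional_reasoning} — and I expect this to be immediate once $I_{(\cdot)}$ is seen to preserve products and inclusions; everything else is propagating the componentwise containments through the product.
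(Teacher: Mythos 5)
Your proof is correct and follows essentially the same route as the paper's: invoke Lemma~\ref{lem:compositional_reasoning} to obtain $\langle (G_i)_{i\in \myset{I}}, \mathcal{E}\rangle \models C$ with $C = (\{0\}, \Pi_{i\in \myset{I}} I_{\underline{X}_i}, \Pi_{i\in \myset{I}} I_{\underline{Y}_i})$, then take products of the componentwise inclusions $X_i^0 \subseteq \underline{X}_i \subseteq \myset{Q}_i$ to conclude safety. The paper's version is simply terser, leaving implicit the bookkeeping you spell out, namely that $G_i \models C_i$, that $\Pi_{j\in N(i)} I_{\underline{Y}_j} = I_{\Pi_{j\in N(i)} \underline{Y}_j}$ so the set-level condition \eqref{eq:agreeing_condition} matches the trajectory-level hypothesis of the lemma, and the final check against Definition~\ref{def:system_safety}.
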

\begin{proof}
    From Lemma \ref{lem:compositional_reasoning},  $\langle (G_i)_{i\in \myset{I}}, \mathcal{E}\rangle \models C$ with $C= ( \{0\}, \Pi_{i\in \myset{I}}{I_{\underline{X}_i}}, \Pi_{i\in \myset{I}}{I_{\underline{Y}_i}}).$ Note that $ \Pi_{i\in \myset{I}}{\underline{X}_i^0} \subseteq \Pi_{i\in \myset{I}}{\underline{X}_i} \subseteq \Pi_{i\in \myset{I}}{\myset{Q}_i}$, thus $\langle (G_i)_{i\in \myset{I}}, \mathcal{E}\rangle $ is safe.
\end{proof}
We refer to the condition \eqref{eq:agreeing_condition} as the \emph{contract compatibility condition} as it indicates whether the contract of a subsystem agrees with that of its parent subsystems. In the general case, the contracts $C_i, i \in \mathcal{I}$ found locally may not satisfy this condition, and we have to refine them so that \eqref{eq:agreeing_condition} holds. We call this refinement process \emph{negotiation}. In what follows, we consider three cases and propose several different algorithms. We note that all algorithms are sound, but differ in finite-step termination and completeness guarantees.

\vspace{1mm}
\subsubsection{Acyclic  connectivity graph}
In this case, we assume that there exists no cycle in the connectivity graph $(\mathcal{I},\mathcal{E})$. In this case, the hierarchical tree structure resembles a client-contractor relation model. For $k\in \text{Child}(i)$, we could view $G_k$ as a client with an iAGC $(  I_{\underline{W}_k}, I_{\underline{X}_k},  I_{\underline{Y}_k} )$, who gives specifications on the behaviour of its parent node $G_i$ (viewed as contractors) by $\underline{W}_k$. Based on this interpretation,  we propose Algorithm \ref{alg:acyclic_graph}. 

In Algorithm \ref{alg:acyclic_graph},  $ \mathcal{I}_0, \mathcal{I}_1, \mathcal{I}_{-1}$ represent the index sets of ready-to-update, to-be-updated, and updated subsystems, respectively. The algorithm starts with the local contract construction for the leaf nodes. Following a bottom-up traversal along the connectivity graph, for each subsystem $G_i$ in $\mathcal{I}_0$, Algorithm \ref{alg:acyclic_graph} first updates its safe region $\myset{Q}_i$ such that it agrees with all its child nodes. This is explicitly conducted in Algorithm \ref{alg:update safe region}, while no operation is needed for leaf nodes. The set intersection in Algorithm \ref{alg:update safe region} is again cast as a SOS program, as follows:
\begin{equation} \label{eq:alg_safe_region}
\begin{aligned}
       &  \hspace{20mm} \min_{ \zeta \geq 0} \zeta \\
       \text{ s.t. } & q_i(x_i) - \zeta - \sigma_{k} (d^k_i \circ o_i(x_i) - \delta^k) \\
       & \hspace{23mm}\in \Sigma[x_i], \forall k \in \text{Child}(i),
\end{aligned} 
\end{equation}
where the decision variables include $\sigma_k \in \Sigma[x_i], k\in \text{Child}(i), $ and a scalar $\zeta$. Recall here $o_i$ is the output map of subsystem $G_i$, $\text{Proj}_{i}(\underline{W}_{k}) = \{y_i: d^k_i(y_i) \geq  \delta^k\}$. Denoting the optimal value by $\zeta^\prime$ and $\myset{Q}_i^\prime = \{x: q_i(x_i) \geq\zeta^\prime \}$, $\myset{Q}_i^\prime$ is then the largest inner-approximation of $ \bigcap_{k\in \text{Child}(i)} o_i^{-1}(\text{Proj}_{i}(\underline{W}_{k})) \cap \myset{Q}_i$. Recall that the subset of $\myset{Q}_i$ is parameterized by $\zeta$ from Assumption \ref{ass:ass1}.5.

After updating the safe region, Algorithm \ref{alg:acyclic_graph} calculates the maximal internal input set $\underline{W}_i^\star$ (Line 6), which can be seen as the least requirement on its parent nodes as discussed in Proposition \ref{prop:local_sets}. Algorithm \ref{alg:update I_0 and I_1} then moves $G_i$ to $\mathcal{I}_{-1}$, and checks for every to-be-updated subsystems whether all their child subsystems have been updated. If yes, then that subsystem is moved to the set of ready-to-update subsystems $\mathcal{I}_0$ and will be updated accordingly.

\begin{proposition} \label{prop:acyclic graph}
Consider an interconnected system with an acyclic connectivity graph $(\mathcal{I},\mathcal{E})$. Algorithm \ref{alg:acyclic_graph} has the following properties:
\begin{enumerate}
    \item Algorithm \ref{alg:acyclic_graph} terminates in finite steps and returns either \texttt{True} or \texttt{False}.
    \item If Algorithm \ref{alg:acyclic_graph} returns \texttt{True}, then  iAGCs $C_i =  (  I_{\underline{W}_i}, I_{\underline{X}_i},  I_{\underline{Y}_i} ), i\in \mathcal{I}$ satisfy the conditions in Proposition \ref{prop:iAGC_composition}.
    \item If Algorithm \ref{alg:acyclic_graph} returns \texttt{False}, then there exist no iAGCs $C_i =  (  I_{\underline{W}_i}, I_{\underline{X}_i},  I_{\underline{Y}_i} ), i\in \mathcal{I}$ that satisfy the conditions in Proposition \ref{prop:iAGC_composition}  under Assumption \ref{ass:ass1}.
\end{enumerate}
\end{proposition}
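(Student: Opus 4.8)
The plan is to treat the three claims separately: termination and soundness follow fairly directly from the acyclic structure together with the composition result (Proposition \ref{prop:iAGC_composition}), whereas completeness requires an inductive monotonicity argument built on Proposition \ref{prop:local_sets}.

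For claim 1 (termination), I would use that an acyclic $(\mathcal{I},\mathcal{E})$ admits a topological order in which every child precedes its parents, and that the bottom-up bookkeeping via $\mathcal{I}_0,\mathcal{I}_1,\mathcal{I}_{-1}$ realizes such an order: each node enters $\mathcal{I}_0$ exactly once (after all its children have been moved to $\mathcal{I}_{-1}$), is processed once, and is then permanently placed in $\mathcal{I}_{-1}$. Since $N<\infty$ and no node is revisited, the main loop runs at most $N$ iterations, and each iteration solves finitely many SOS/bisection subproblems that either return a certificate or report infeasibility. Hence the procedure halts, returning \texttt{False} if any instance of \eqref{eq:alg_safe_region} or \eqref{eq:maximal internal input set} is infeasible, and \texttt{True} once every node is processed.

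For claim 2 (soundness), assume the algorithm returns \texttt{True}; then each node carries a contract $C_i=(I_{\underline{W}_i},I_{\underline{X}_i},I_{\underline{Y}_i})$ with $\underline{W}_i=\underline{W}_i^\star$, $\underline{X}_i=\underline{\myset{Q}}_i^\star$, $\underline{Y}_i=o_i(\underline{X}_i)$, and $G_i\models C_i$ by Proposition \ref{prop:local_constract}. The nesting $X_i^0\subseteq\underline{X}_i\subseteq\myset{Q}_i$ follows from \eqref{eq:sos_initialset}, \eqref{eq:sos_safeset} and $\zeta^\star\geq 0$. The remaining point is the compatibility condition \eqref{eq:agreeing_condition}, and the key observation is that a parent $j\in N(i)$ is processed after its child $i$, hence after $\underline{W}_i$ is fixed; its safe region is then shrunk by \eqref{eq:alg_safe_region} to the largest superlevel set $\myset{Q}_j'$ contained in $\bigcap_{k\in\text{Child}(j)} o_j^{-1}(\text{Proj}_j(\underline{W}_k))\cap\myset{Q}_j$, so that $\underline{X}_j\subseteq\myset{Q}_j'\subseteq o_j^{-1}(\text{Proj}_j(\underline{W}_i))$, i.e. $\underline{Y}_j=o_j(\underline{X}_j)\subseteq\text{Proj}_j(\underline{W}_i)$. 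Because $\underline{W}_i$ is the product of its per-parent projections, taking the product over $j\in N(i)$ yields $\Pi_{j\in N(i)}\underline{Y}_j\subseteq\underline{W}_i$, and Proposition \ref{prop:iAGC_composition} certifies safety (for a root node $N(i)=\emptyset$ makes the condition vacuous).

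For claim 3 (completeness) I would prove the contrapositive: if some valid family $\{\tilde{C}_i=(I_{\tilde{W}_i},I_{\tilde{X}_i},I_{\tilde{Y}_i})\}$ satisfying Proposition \ref{prop:iAGC_composition} exists within the restricted class of Assumption \ref{ass:ass1}, the algorithm never hits infeasibility. The engine is an induction along the topological order with invariant: at each processed node $i$ the subproblems are feasible, $\tilde{X}_i\subseteq\myset{Q}_i'$ (i.e. $\zeta_i'\leq\tilde\zeta_i$) and $\tilde{W}_i\subseteq\underline{W}_i^\star$ (i.e. $\delta_i^\star\leq\tilde\delta_i$). For a leaf, $\tilde{X}_i\subseteq\myset{Q}_i$ and validity of $\tilde C_i$ make \eqref{eq:maximal internal input set} feasible on $\tilde X_i$; item 2 of Proposition \ref{prop:local_sets} transfers feasibility to the larger region $\myset{Q}_i$ with $\delta_i^\star\leq\tilde\delta_i$. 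In the inductive step, compatibility of $\{\tilde C_i\}$ forces $\tilde X_i\subseteq o_i^{-1}(\text{Proj}_i(\tilde W_k))$ for each child $k$, and the hypothesis $\tilde W_k\subseteq\underline W_k^\star$ places $\tilde X_i$ inside the set inner-approximated by \eqref{eq:alg_safe_region}, so \eqref{eq:alg_safe_region} is feasible with $\myset{Q}_i'\supseteq\tilde X_i$; a second application of item 2 then propagates feasibility and $\tilde W_i\subseteq\underline W_i^\star$ through \eqref{eq:maximal internal input set}. I expect this invariant to be the main obstacle, for two reasons: first, one must verify that the algorithm's choices are genuinely extremal at every node—maximal internal-input set (item 4) and largest compatible safe region from \eqref{eq:alg_safe_region}—so that the "least favorable" monotonicity composes correctly from leaves to roots; second, one must ensure the SOS/degree restriction of Assumption \ref{ass:ass1}.6 does not silently break the monotone transfer of feasibility, i.e. that an inclusion such as $\tilde X_i\subseteq o_i^{-1}(\text{Proj}_i(\underline W_k^\star))$ certified for the competing contract remains SOS-certifiable in the same family. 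Making the phrase ``under Assumption \ref{ass:ass1}'' precise—so that the competing $\tilde C_i$ inhabit exactly the scalar-parameterized class over which the algorithm optimizes—is what keeps the argument honest.
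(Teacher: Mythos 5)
Your proposal is correct and takes essentially the same route as the paper's proof: termination by noting each node is processed exactly once ($|\mathcal{I}_{-1}|$ grows by one per iteration, bounded by $|\mathcal{I}|$), soundness via the observation that the safe-region update \eqref{eq:alg_safe_region} performed before a parent is processed forces $\underline{Y}_j = o_j(\underline{X}_j)\subseteq \text{Proj}_j(\underline{W}_i)$ and hence \eqref{eq:agreeing_condition}, and completeness by a leaves-to-root induction transferring the invariant ``hypothetical contract sets are contained in the algorithm's extremal sets'' through Proposition \ref{prop:local_sets} item 2 and the largest-inner-approximation property of Algorithm \ref{alg:update safe region}. The paper states claim 3 as a contradiction rather than your contrapositive, which is a purely cosmetic difference, and your closing caveat about keeping the competing contracts inside the scalar-parameterized SOS class of Assumption \ref{ass:ass1} is exactly the role that assumption plays in the paper's argument.
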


\begin{proof}
    For every iteration of Algorithm \ref{alg:acyclic_graph}, it will either terminate due to \texttt{infeasibility} or increase the cardinality of $\myset{I}_{-1}$ by $1$. Since $|\myset{I}_{-1}|$ is upper bounded by $|\myset{I}|$, we know it has to terminate in finite steps. This proves Claim 1).
    
    When Algorithm \ref{alg:acyclic_graph} returns \texttt{True},  each subsystem has gone through Step 4 - Step 10 and computed an iAGC $(  I_{\underline{W}_i^\star}, I_{\underline{X}_i},  I_{\underline{Y}_i} ) $ in a bottom-up transverse. As Step 4 reduces the safe region of subsystem $G_i$, and from \eqref{eq:maximal internal input set}, we have $ X_i^0 \subseteq \underline{X}_i \subseteq \myset{Q}_i^\prime \subseteq \myset{Q}_i$. Moreover, for any $k\in \text{Child}(i)$, based on Algorithm \ref{alg:update safe region}, $ \underline{Y}_i  = o_i(\underline{X}_i) \subseteq o_i(\myset{Q}_i^\prime) \subseteq \text{Proj}_{i}(\underline{W}_{k})$, which proves that the contract compatibility condition \eqref{eq:agreeing_condition} holds. Following Proposition \ref{prop:iAGC_composition}, we thus certify the safety of the interconnected systems. This proves Claim 2).

    We show claim 3) by contradiction. Suppose that there exist iAGCs $C_i =  (  I_{\underline{W}_i}, I_{\underline{X}_i},  I_{\underline{Y}_i} ), i\in \mathcal{I}$ that satisfy the conditions in Proposition \ref{prop:iAGC_composition}. Denote the safe regions with which such iAGCs are obtained by $\myset{Q}_i^\prime, i\in \mathcal{I}$, i.e., the functions defining those sets fulfill the SOS constraints in \eqref{eq:minimal safe set} (but not necessarily minimize the size of the safe region) and the sets fulfill the compatibility condition \eqref{eq:agreeing_condition}. Thanks to the tree structure, we can start our argumentation from the leaf nodes and iteratively reason about the nodes that are one level above, and end at the root node. From Proposition \ref{prop:local_sets}, for $G_i$ being the leaf nodes, we have $\underline{W}_i\subseteq \underline{W}_i^\star$, the set obtained from \eqref{eq:maximal internal input set}. For $G_j$ being the nodes one level above the leaf nodes, from Algorithm \ref{alg:update safe region}, we know $\myset{Q}_j^\prime \subseteq \myset{Q}_j^{\prime\star}$, for that $\myset{Q}_j^{\prime\star}$ is the largest inner-approximation of all safe regions and that $\underline{W}_i\subseteq \underline{W}_i^\star$. Following  Proposition \ref{prop:local_sets} item 2, we know  $\underline{W}_j \subseteq \underline{W}_j^\star $, where $\underline{W}_j^\star$ is the set obtained by solving \eqref{eq:maximal internal input set} with $\myset{Q}_j^{\prime\star}$. Recursively, we thus obtain that $(  I_{\underline{W}_i^\star}, I_{\underline{X}_i},  I_{\underline{Y}_i} ), i\in \mathcal{I}$ exists. This contradicts with the premise that Algorithm \ref{alg:acyclic_graph} returns \texttt{False}. Thus Claim 3) is proven. 
\end{proof}

\begin{algorithm}[h]
\caption{\texttt{Contract construction for acyclic graph}} \label{alg:acyclic_graph}
 \algblock[TryCatchFinally]{try}{endtry}
\algcblock[TryCatchFinally]{TryCatchFinally}{finally}{endtry}
\algcblockdefx[TryCatchFinally]{TryCatchFinally}{catch}{endtry}
	[1]{\textbf{catch} #1}
	{\textbf{end try}}
 
\begin{algorithmic}[1]
\Require  $G_i$, $\myset{Q}_i, \forall i\in \mathcal{I} $
\State $ \mathcal{I}_0  \leftarrow $ set of leaf nodes, $\mathcal{I}_1  \leftarrow \mathcal{I}\setminus \mathcal{I}_0 $, $\mathcal{I}_{-1}\leftarrow \emptyset$.
\While{$\mathcal{I}_0 \neq \emptyset$}
\For{each subsystem $ G_i, i \in  \mathcal{I}_0 $}
\State $\myset{Q}_i^\prime \leftarrow$ update the local safe region $\myset{Q}_i$ by Alg. \ref{alg:update safe region};
\try
\State  calculate $\delta_i^\star $ by  solving \eqref{eq:maximal internal input set} with safe region $\myset{Q}_i^\prime$;
\State compute the corresp. iAGC $(  I_{\underline{W}_i^\star}, I_{\underline{X}_i},  I_{\underline{Y}_i} )$ 
\catch{infeasible}
\State \textbf{return False};
 \endtry
\State update $ \mathcal{I}_{0},  \mathcal{I}_{1},  \mathcal{I}_{-1}$ by Alg. \ref{alg:update I_0 and I_1}.
\EndFor
\EndWhile
\State \textbf{return True}.
\end{algorithmic}
\end{algorithm}

\begin{algorithm}[h]
\caption{\texttt{Update safe region}} \label{alg:update safe region}
\begin{algorithmic}[1]
\Require Safe region $\myset{Q}_i $ and  iAGCs  $(  I_{\underline{W}_k}, I_{\underline{X}_k},  I_{\underline{Y}_k} )$ for all $k\in \text{Child}(i)$.
\State $M_i \leftarrow  \bigcap_{k\in \text{Child}(i)} o_i^{-1}(\text{Proj}_{i}(\underline{W}_{k})) \cap \myset{Q}_i$  
\State $\myset{Q}_i^\prime \leftarrow $ largest inner-approximation of $M_i$ by \eqref{eq:alg_safe_region}
\State \textbf{return} $\myset{Q}_i^\prime$.
\end{algorithmic}
\end{algorithm}

\begin{algorithm}[h]
\caption{\texttt{Update $\myset{I}_0,  \myset{I}_1  $ and $\mathcal{I}_{-1}$}} \label{alg:update I_0 and I_1}
\begin{algorithmic}[1]
\Require Subsystem $G_i $,  $\myset{I}_0,  \myset{I}_1  $ and $\mathcal{I}_{-1}$. 
\State $\mathcal{I}_0 \leftarrow \mathcal{I}_0 \setminus \{i\}, \mathcal{I}_{-1} \leftarrow \mathcal{I}_{-1} \cup \{i\}$,
\For{each subsystem $G_k$, $k \in \mathcal{I}_1$}
\If{$\text{Child}(k)\subseteq \mathcal{I}_{-1}$}
\State  $\mathcal{I}_0 \leftarrow \mathcal{I}_0 \cup \{k\}, \mathcal{I}_{1} \leftarrow \mathcal{I}_{1} \setminus\{k\}$,
\EndIf
\EndFor
\end{algorithmic}
\end{algorithm}

\vspace{1mm}
\subsubsection{Homogeneous interconnected system} In this case, we consider the homogeneous interconnected system in the following sense.
\begin{definition} \label{def:homo_sys}
    An  interconnected system $ \langle (G_i)_{i\in \myset{I}}, \mathcal{E}\rangle$ is called homogeneous if  $G_i = G_j$ and $ \myset{Q}_i = \myset{Q}_j, \forall i, j \in \mathcal{I}$.
\end{definition}

\begin{algorithm}[h]
\caption{\texttt{Contract construction for homogeneous systems}} \label{alg:homo_system}
 \algblock[TryCatchFinally]{try}{endtry}
\algcblock[TryCatchFinally]{TryCatchFinally}{finally}{endtry}
\algcblockdefx[TryCatchFinally]{TryCatchFinally}{catch}{endtry}
	[1]{\textbf{catch} #1}
	{\textbf{end try}}
 
\begin{algorithmic}[1]
\Require $G_i, \myset{Q}_i$
\try
\State  calculate $\delta_i^\star $ by solving \eqref{eq:maximal internal input set}
\State calculate $\zeta_i^\star$ by letting $\delta_i = \delta_i^\star$ and solving \eqref{eq:minimal safe set}
\State compute the corresp. iAGC $C_i =(  I_{\underline{W}_i^\star}, I_{\underline{X}_i^\star},  I_{\underline{Y}_i^\star} )$ 
\catch{infeasible}
\State \textbf{return False}
 \endtry
 \State Assign all subsystem $G_j, j\in \mathcal{I}$ with an iAGC $C_j =(  I_{\underline{W}_j^\star}, I_{\underline{X}_j^\star},  I_{\underline{Y}_j^\star} )$ with $\underline{W}_j^\star = \underline{W}_i^\star, \underline{X}_j^\star = \underline{X}_i^\star, \underline{Y}_j^\star = \underline{Y}_i^\star $.
 \If{ $\underline{Y}_j^\star \subseteq \text{Proj}_{j}(\underline{W}_{i}^\star) $ for all $j\in N(i)$ }
\State \textbf{return True}
\Else
\State $\myset{Q}_i^\prime \leftarrow$ update the local safe region $\myset{Q}_i$ by Alg. \ref{alg:update safe region};
\State Goto Step 1 with updated safe region $\myset{Q}_i^\prime $
\EndIf
\end{algorithmic}
\end{algorithm}

Algorithm \ref{alg:homo_system} starts with solving for one subsystem the maximal internal input set and the corresponding minimal safe region. If the compatibility condition is met, then we have verified the safety of the interconnected system; otherwise, we will reduce the safe region by taking the set intersection in Algorithm \ref{alg:update safe region} and start the same process with the updated safe region $\myset{Q}_{i}^\prime$.

We have the following results in this case:
\begin{proposition} \label{prop:homogeneous_case}
     Consider a homogeneous interconnected system as per Definition \ref{def:homo_sys}. Assume that $\{x_i: q_i(x_i) \geq a\} \subseteq X_i^0 $ for some $a>0$. Algorithm \ref{alg:homo_system} has the following properties:
     \begin{enumerate}
    \item Algorithm \ref{alg:homo_system} returns either \texttt{True} or \texttt{False} eventually.
    \item If Algorithm \ref{alg:homo_system} returns \texttt{True}, then  iAGCs $C_i= (  I_{\underline{W}_i^\star}, I_{\underline{X}_i^\star},  I_{\underline{Y}_i^\star} )$, $i\in \mathcal{I}$  satisfy the conditions in Proposition \ref{prop:iAGC_composition}.
    \item If Algorithm \ref{alg:homo_system} returns \texttt{False}, then there exists no common contract $  C_0 =(  I_{\underline{W}_0}, I_{\underline{X}_0},  I_{\underline{Y}_0} )$ such that $G_i \models C_0, i\in \mathcal{I}$ and  that the conditions in Proposition \ref{prop:iAGC_composition} are satisfied under Assumption \ref{ass:ass1}.
\end{enumerate}
\end{proposition}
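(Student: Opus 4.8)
The plan is to establish the three claims in the order soundness (Claim~2), termination (Claim~1), and completeness (Claim~3), reusing the monotonicity and extremality statements of Proposition~\ref{prop:local_sets}, the composition result of Proposition~\ref{prop:iAGC_composition}, and the fact that homogeneity collapses the whole negotiation to finding a single self-compatible contract. Claim~2 is the most direct. When Algorithm~\ref{alg:homo_system} returns \texttt{True} it has passed the test $\underline{Y}_j^\star \subseteq \text{Proj}_j(\underline{W}_i^\star)$ for all $j \in N(i)$ on the representative subsystem. Because every subsystem is assigned the same contract in Step~6 and the subsystems are identical, this single inclusion is, node by node, exactly the compatibility condition $\Pi_{j\in N(i)} \underline{Y}_j \subseteq \underline{W}_i$ of \eqref{eq:agreeing_condition}. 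The guarantee inclusion $X_i^0 \subseteq \underline{X}_i^\star \subseteq \myset{Q}_i$ follows from the initial-set constraint \eqref{eq:sos_initialset} and the safe-set constraint appearing in \eqref{eq:minimal safe set}, so both hypotheses of Proposition~\ref{prop:iAGC_composition} hold and the interconnection is safe.

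For Claim~1 I would track the scalar level $\rho^{(n)}$ defining the outer safe region $\myset{Q}_i^{(n)} = \{x_i : q_i(x_i) \ge \rho^{(n)}\}$ at outer iteration $n$. Two facts drive termination. First, every pass that does not return strictly increases $\rho$: if the test in Step~7 fails, then $o_i(\underline{X}_i^\star)$ is not contained in the child's projected input set, so the largest inner approximation produced by \eqref{eq:alg_safe_region} in Algorithm~\ref{alg:update safe region} strictly shrinks the region, giving $\rho^{(n+1)} > \rho^{(n)}$. Second, $\rho^{(n)}$ cannot exceed the threshold $a$: feasibility of \eqref{eq:maximal internal input set} forces $X_i^0 \subseteq \underline{X}_i^\star \subseteq \myset{Q}_i^{(n)}$, and combined with the standing assumption $\{x_i : q_i(x_i) \ge a\} \subseteq X_i^0$ this yields $\{q_i \ge a\} \subseteq \{q_i \ge \rho^{(n)}\}$, i.e. $\rho^{(n)} \le a$; the moment $\rho$ would pass $a$, \eqref{eq:maximal internal input set} is infeasible and Step~5 returns \texttt{False}. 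A strictly increasing sequence confined to $[0,a]$ and taking values on the finite grid generated by the scalar bisection used to solve \eqref{eq:maximal internal input set}--\eqref{eq:minimal safe set} can only have finitely many terms, so the loop halts.

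Claim~3 I would prove by contraposition, showing that if a common self-compatible contract exists the algorithm never returns \texttt{False} and hence, by Claim~1, returns \texttt{True}. Let $C_0 = (I_{\underline{W}_0}, I_{\underline{X}_0}, I_{\underline{Y}_0})$ be such a contract, with guarantee set $\underline{X}_0 = \{x_i: q_i(x_i) \ge \zeta_0\}$ satisfying $X_i^0 \subseteq \underline{X}_0 \subseteq \myset{Q}_i$, the SOS constraints of \eqref{eq:minimal safe set}, and self-compatibility $\underline{Y}_0 \subseteq \text{Proj}_i(\underline{W}_0)$. The key step is the invariant $\myset{Q}_i^{(n)} \supseteq \underline{X}_0$, proved by induction. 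Assuming it at step $n$, monotonicity (Proposition~\ref{prop:local_sets}, item~2) gives $\underline{W}_i^\star(\myset{Q}_i^{(n)}) \supseteq \underline{W}_i^\star(\underline{X}_0) \supseteq \underline{W}_0$, where the last inclusion is maximality (item~4); hence the projected input set obeys $\text{Proj}_i(\underline{W}_i^\star(\myset{Q}_i^{(n)})) \supseteq \text{Proj}_i(\underline{W}_0) \supseteq \underline{Y}_0 = o_i(\underline{X}_0)$, so $\underline{X}_0 \subseteq o_i^{-1}(\text{Proj}_i(\underline{W}_i^\star))$. Together with $\underline{X}_0 \subseteq \myset{Q}_i^{(n)}$ this places $\underline{X}_0$ inside the set intersected in Algorithm~\ref{alg:update safe region}, so its largest inner approximation satisfies $\myset{Q}_i^{(n+1)} \supseteq \underline{X}_0$. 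The invariant keeps $\myset{Q}_i^{(n)} \supseteq \underline{X}_0 \supseteq X_i^0$, so \eqref{eq:maximal internal input set} stays feasible and \texttt{False} is never returned; by Claim~1 the algorithm terminates, and therefore returns \texttt{True}. Contrapositively, a \texttt{False} output certifies that no such common contract exists.

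The main obstacle is the finite-step half of Claim~1. The threshold assumption $\{x_i : q_i(x_i) \ge a\} \subseteq X_i^0$ cleanly caps $\rho^{(n)}$ at $a$, but boundedness together with strict monotonicity only yields convergence of $\rho^{(n)}$, not termination in finitely many iterations; closing this gap requires either the finite value set induced by the scalar bisection (the route above) or an independent argument that each failed compatibility test shrinks the safe region by a uniformly bounded amount. Identifying and justifying such a uniform decrement, or else formalizing the limiting region as a fixed point of the update at which self-compatibility provably holds, is the delicate point on which the finiteness claim rests, and is where I expect the bulk of the technical effort to go.
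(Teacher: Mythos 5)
Your proposal is correct and follows essentially the same route as the paper on all three claims: soundness of \texttt{True} by homogeneity (every node is assigned the same contract and has the same input/output structure, so the single inclusion checked for the representative subsystem is the compatibility condition \eqref{eq:agreeing_condition} node by node), termination by monotone growth of the scalar safe-region level capped at $a$ via $\{x_i: q_i(x_i)\ge a\}\subseteq X_i^0$, and completeness from Proposition \ref{prop:local_sets} (items 2 and 4) combined with the maximality of the inner approximation computed by Algorithm \ref{alg:update safe region}. The one packaging difference is in Claim 3: the paper argues by contradiction, listing the finite sequence of safe regions $\myset{Q}^1\supseteq\cdots\supseteq\myset{Q}^M$, placing the hypothetical contract's region between consecutive iterates, $\myset{Q}^{r+1}\subsetneqq\myset{Q}_0\subseteq\myset{Q}^r$, and contradicting the maximality of $\myset{Q}^{r+1}$. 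Your inductive invariant $\myset{Q}_i^{(n)}\supseteq\underline{X}_0$ is the contrapositive of exactly this argument, built from the same two lemmas; it is arguably cleaner, since it avoids the ``without loss of generality'' placement of $\myset{Q}_0$ and shows directly that feasibility of \eqref{eq:maximal internal input set} is never lost along the iteration.

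The termination gap you flag at the end is genuine, and you should know that the paper does not close it either: its proof of Claim 1 asserts that the monotone level sequence ``has to converge to some number smaller than $a$ (in which case we found compatible local contracts and the algorithm terminates with \texttt{True}) or becomes larger than $a$'' --- the parenthetical is precisely the unjustified step, since convergence of a bounded monotone sequence is an asymptotic property and does not by itself produce exact compatibility at any finite iteration. Your bisection-grid argument (the optimized scalars take values on a finite tolerance grid, so a strictly increasing sequence in $[0,a]$ has finitely many terms) is a legitimate way to close this for the algorithm as implemented, at the price of making the claim about the numerical realization rather than the idealized set-valued iteration; the alternative you mention, a uniform lower bound on how much each failed compatibility test shrinks the safe region, is not supplied by anything in the paper. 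So treat this as a gap shared with (indeed inherited from) the paper's own proof, not a defect of your proposal relative to it.
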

\begin{proof}
    Consider the case when 
  $X_i^0$ is a subset of  the local safe region $\myset{Q}_i$ (otherwise,  \eqref{eq:maximal internal input set} yields \texttt{infeasible} in Step 2). There are three possibilities: 1) \eqref{eq:maximal internal input set} is infeasible, for which the algorithm terminates with \texttt{False}; 2) the contract compatibility condition is satisfied, for which the algorithm terminates with \texttt{True}; 3) the algorithm starts a new iteration. For the third scenario, the local safe region $\myset{Q}_i$ is updated at every iteration, and at each iteration, the set size gets smaller. In particular, as the set size is parameterized by a scaler $\zeta$, which is monotonically increasing, we know it has to converge to some number smaller than $a$ (in which case we found compatible local contracts and the algorithm terminates with \texttt{True}) or becomes larger than $a$ (in which case \eqref{eq:maximal internal input set} yields \texttt{infeasible}). In either case, the algorithm will return \texttt{True} or \texttt{False} if the iteration goes to infinite.

    If Algorithm \ref{alg:homo_system} returns \texttt{True}, from Step 9, the contract compatibility condition holds for subsystem $G_i$. Noting that each subsystem has the same number of parent nodes (implicit from the fact that each subsystem has the same output set $Y_i = Y_j$ and the same internal input set $W_i = W_j$) and the same assumption and guarantee sets (Step 8), we know this compatibility condition also holds for all subsystems. Thus, Claim 2) is shown.

    We show Claim 3) by contradiction. Suppose that a common local contract $C_0  =(  I_{\underline{W}_0}, I_{\underline{X}_0},  I_{\underline{Y}_0} )$ exists and the compatibility condition \eqref{eq:agreeing_condition} holds. Denote the safe region with which such an iAGC is obtained by $\myset{Q}_0$. That is, the functions defining these sets fulfill the SOS constraints in \eqref{eq:minimal safe set} (but not necessarily minimize the size of the safe region) and 
    \begin{equation} \label{eq:Q0_homo}
        o(\myset{Q}_0) \subseteq \text{Proj}_i(\underline{W}_0).
    \end{equation}   
    
    Trivially, we know $X_i^0 \subseteq \myset{Q}_0 \subseteq \myset{Q}_i$. Let the sequence of the updated safe regions of Algorithm \ref{alg:homo_system} be $Q^1 = \myset{Q}_i, Q^2, \ldots, Q^M$ (it is a finite sequence of sets with shrinking size following Claim 1)). Without loss of generality, assume $\myset{Q}^{r+1} \subsetneqq \myset{Q}_0 \subseteq \myset{Q}^{r} $. Following Proposition \ref{prop:local_sets} items 2 and 4, we know $\underline{W}_0 \subseteq\left.\underline{W}^\star\right\vert_{\myset{Q}_{0}} \subseteq \left.\underline{W}^\star\right\vert_{\myset{Q}^{r}} $, where $\left.\underline{W}^\star\right\vert_{\myset{Q}}$ represents the maximal internal input set given safe region $\myset{Q}$. Recall $Q^{r+1}$ is obtained from Alg. 2, i.e., $Q^{r+1}$ is the largest safe region such that $ o(\myset{Q}^{r+1}) \subseteq \text{Proj}_i(\left.\underline{W}^\star\right\vert_{\myset{Q}^r})$.
    However, from \eqref{eq:Q0_homo}, we know $ o(\myset{Q}_0) \subseteq \text{Proj}_i(\underline{W}_0) \subseteq \text{Proj}_i(\left.\underline{W}^\star\right\vert_{\myset{Q}^r})$ and $\myset{Q}^{r+1} \subsetneqq \myset{Q}_0$. This yields a contradiction, which thus proves Claim 3).
    \end{proof}

A common practice to bound the total number of iterations is to add extra termination conditions, e.g., Algorithm \ref{alg:homo_system} terminates if the updated safe region $\myset{Q}_i^\prime $ in Line 9 (with its level value $\zeta^\prime$) is close in size compared to the original one $\myset{Q}_i$ (with its level value $\zeta$), i.e., $ \zeta^\prime-\zeta<\epsilon$ for some small positive constant $\epsilon$. Other termination conditions include the maximal number of iterations allowed. When the algorithm is terminated due to these conditions, we do not have a definite conclusion about the existence of compatible contracts. 

\vspace{1mm}
\subsubsection{General case}
In the following, we provide a constructive approach for safety verification of interconnected systems with general connectivity graphs and dynamics. 

\begin{algorithm}[h]
\caption{\texttt{Contract construction for general systems}} \label{alg:general_system}
 \algblock[TryCatchFinally]{try}{endtry}
\algcblock[TryCatchFinally]{TryCatchFinally}{finally}{endtry}
\algcblockdefx[TryCatchFinally]{TryCatchFinally}{catch}{endtry}
	[1]{\textbf{catch} #1}
	{\textbf{end try}}
 
\begin{algorithmic}[1]
\Require Subsystem $G_i$ and its safe region $\myset{Q}_i ,i \in \mathcal{I}$ and connectivity graph $(\myset{I},\myset{E})$ 
\State run Algorithm \ref{alg:acyclic_graph}; 
\For{ each $G_i \in \mathcal{I}_1$}
\try
\State  calculate $\delta_i^\star $ by solving \eqref{eq:maximal internal input set};
\State calculate $\zeta_i^\star$ by letting $\delta_i = \delta_i^\star$ and solving \eqref{eq:minimal safe set}
\State compute the corresp. iAGC $(  I_{\underline{W}_i^\star}, I_{\underline{X}_i^\star},  I_{\underline{Y}_i^\star} )$ 
\catch{infeasible}
\State \textbf{return False};
 \endtry
 \EndFor
 \If{  contract compatibility condition \eqref{eq:agreeing_condition} does not hold}
 \State update $\myset{Q}_i$ for all $i\in \mathcal{I}_1$ by Alg. \ref{alg:update safe region}
 \State Goto Step 2
\EndIf
\State \textbf{return True}.
\end{algorithmic}
\end{algorithm}

\begin{proposition} \label{prop:general_case}
    For the general case, if Algorithm \ref{alg:general_system} returns \texttt{True}, then  iAGCs $C_i = (  I_{\underline{W}_i}, I_{\underline{X}_i},  I_{\underline{Y}_i} ), i\in \mathcal{I}$ satisfies the conditions in Proposition \ref{prop:iAGC_composition}.
\end{proposition}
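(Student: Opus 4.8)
The plan is to show that a \texttt{True} return from Algorithm~\ref{alg:general_system} certifies exactly the two hypotheses of Proposition~\ref{prop:iAGC_composition}: (i) every subsystem $G_i$ admits a valid iAGC $C_i=(I_{\underline{W}_i},I_{\underline{X}_i},I_{\underline{Y}_i})$ with $X_i^0\subseteq\underline{X}_i\subseteq\myset{Q}_i$, and (ii) the compatibility condition~\eqref{eq:agreeing_condition} holds for every $i\in\mathcal{I}$. Since only soundness is claimed here (no termination or completeness), I would not analyze whether the iterative safe-region shrinking converges; I only need to read off what must be true at the instant the algorithm returns \texttt{True}, and then invoke Proposition~\ref{prop:iAGC_composition}.

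First I would dispatch (ii) by a control-flow argument. The terminal \texttt{return True} is reached only by falling through the conditional that tests~\eqref{eq:agreeing_condition}; hence the algorithm exits successfully exactly when~\eqref{eq:agreeing_condition} holds. As this test is evaluated over all $i\in\mathcal{I}$ and on the final family of contracts, hypothesis (ii) follows immediately, with no need to track how compatibility was attained across the iterations.

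Next I would establish (i) by partitioning $\mathcal{I}$ into the nodes $\mathcal{I}_{-1}$ already processed by the call to Algorithm~\ref{alg:acyclic_graph} in Step~1 and the remaining (cyclic) nodes $\mathcal{I}_1$ handled by the inner loop. For $i\in\mathcal{I}_{-1}$ the contract is built as in Algorithm~\ref{alg:acyclic_graph} by solving~\eqref{eq:maximal internal input set}; since execution did not abort with \texttt{False}, that program was feasible, so Proposition~\ref{prop:local_constract} makes the returned $h_i$ a valid CBF with $X_i^0\subseteq\underline{X}_i\subseteq\myset{Q}_i$, and Proposition~\ref{prop:cbf_as_iAGC} gives $G_i\models C_i$. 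For $i\in\mathcal{I}_1$ the final contract is produced by solving~\eqref{eq:maximal internal input set} and then~\eqref{eq:minimal safe set} on the current (possibly shrunk) safe region; again, a \texttt{True} return means the last solve was feasible, so~\eqref{eq:sos_initialset}--\eqref{eq:sos_safeset} yield $X_i^0\subseteq\underline{X}_i\subseteq\myset{Q}_i^{\mathrm{current}}\subseteq\myset{Q}_i$, and Propositions~\ref{prop:local_constract} and~\ref{prop:cbf_as_iAGC} again deliver a valid $C_i$. Collecting the two cases proves (i) for the entire family $\{C_i\}_{i\in\mathcal{I}}$, and together with (ii) the hypotheses of Proposition~\ref{prop:iAGC_composition} are met.

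The main obstacle is conceptual rather than computational: one must be careful that the iterative updates of the $\mathcal{I}_1$ safe regions continually change the guarantees $\underline{Y}_i$ and assumptions $\underline{W}_i$ of the cyclic nodes, which could disturb compatibility on edges meeting the fixed $\mathcal{I}_{-1}$ contracts. The point to emphasize is that for \emph{soundness} this is irrelevant---only the value of~\eqref{eq:agreeing_condition} on the \emph{final} contracts, which is exactly what the terminating check inspects, enters the argument. That the iteration is actually guaranteed to reach such a compatible configuration is precisely the completeness/termination statement that is deliberately not asserted in the general case, so I would take care not to claim it here.
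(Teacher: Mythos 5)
Your proof is correct, and it is precisely the ``straightforward'' argument the paper itself omits: feasibility of the local SOS programs \eqref{eq:maximal internal input set}--\eqref{eq:minimal safe set} (for both the nodes handled by Algorithm~\ref{alg:acyclic_graph} and those in $\mathcal{I}_1$) gives valid local iAGCs with $X_i^0\subseteq\underline{X}_i\subseteq\myset{Q}_i$ via Propositions~\ref{prop:local_constract} and~\ref{prop:cbf_as_iAGC}, while the terminal check in Algorithm~\ref{alg:general_system} guarantees \eqref{eq:agreeing_condition} on the final contracts, so Proposition~\ref{prop:iAGC_composition} applies. Your closing remark---that soundness needs only the final compatibility check and deliberately avoids any termination/completeness claim---matches the paper's framing of the general case exactly.
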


The proof is straightforward and omitted. We note that Algorithm \ref{alg:general_system} simplifies to Algorithm \ref{alg:acyclic_graph} in the case of acyclic connectivity graph, and becomes Algorithm \ref{alg:homo_system} for homogeneous interconnected systems. Proposition \ref{prop:general_case} only provides sufficient conditions on the existence of compatible contracts. We will explore how to provide completeness guarantees for Algorithm \ref{alg:general_system} or design other algorithms with such guarantees in our future work.

\begin{remark}[Meaning of \texttt{False}]
    It is worth highlighting that our completeness results are established under Assumption \ref{ass:ass1}, and that the algorithm returning \texttt{False} does not mean that the interconnected system is unsafe. It simply means that we can not certify the safety of the interconnected systems under the restrictions in Assumption \ref{ass:ass1}. When a \texttt{False} is encountered, one can group two or several subsystems together and run the algorithms again by taking a group of subsystems as a single one.
\end{remark}

\section{Examples} \label{sec:examples}
\subsection{Vehicular platooning: an acyclic example }

Consider a vehicular platooning scenario adapted from~\cite{liu2019compositional} where $N+1$ autonomous vehicles are moving on a single-lane road. We assume that each vehicle has the same length $l$, and has access to the state information of its proceeding vehicle and the leading vehicle (hereafter referred to as the leader). The dynamics relative to the leader is
\begin{equation} \label{eq:platooning_dyn}
    \begin{aligned}
        \tildedot{p}_i & = \tilde{v}_i, \\
        \tildedot{v}_i & = \tilde{u}_i  -(\tilde{v}_i - \tilde{v}_{i-1})^3,
    \end{aligned}
\end{equation}
where $\tilde{p}_i(t) = p_0(t) - p_i(t), \tilde{v}_i = v_0(t) - v_i(t), \tilde{u}_i(t) = u_0(t) - u_i(t)$, $i = 1,2,\ldots,N, k_0 >0$. Here $ p_i, v_i, u_i, \tilde{p}_i, \tilde{v}_i, \tilde{u}_i$ denote the absolute position, the absolute velocity, the absolute control, the relative position, the relative velocity, and the relative control of vehicle $i$, respectively. We conveniently let $\tilde{v}_0 = 0$ for ease of notation. 

 \begin{figure}[t]
    \centering
    \includegraphics[width=0.9\linewidth]{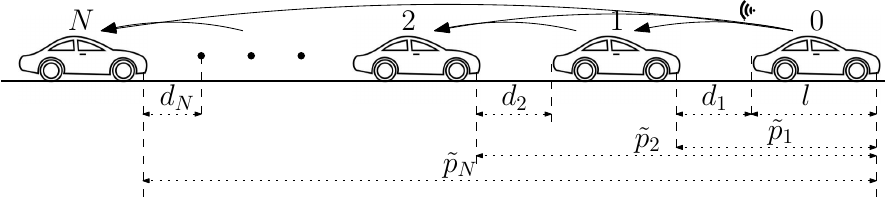}
    \caption{Platooning scenario}
    \label{fig:platooning}
\end{figure}

\begin{figure*}[t]
	\centering
	\begin{subfigure}[t]{0.24\linewidth}
		\includegraphics[width=\linewidth]{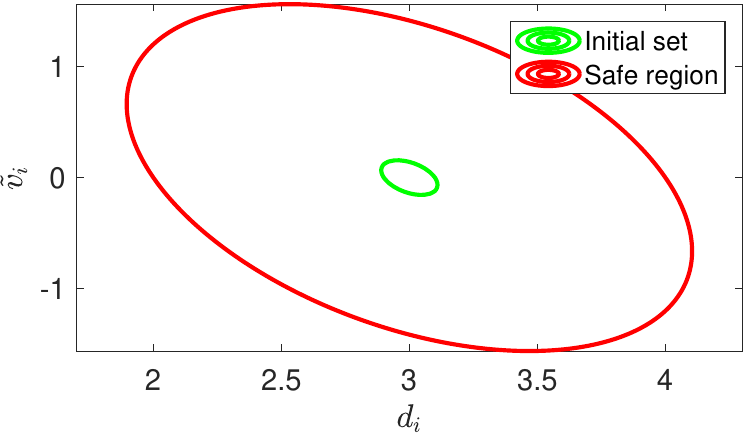}
		\caption{  $X_i^0$ and $\myset{Q}_i$, $i = 1,2,3$. }   
	\end{subfigure} 
	\begin{subfigure}[t]{0.24\linewidth}
		\includegraphics[width=\linewidth]{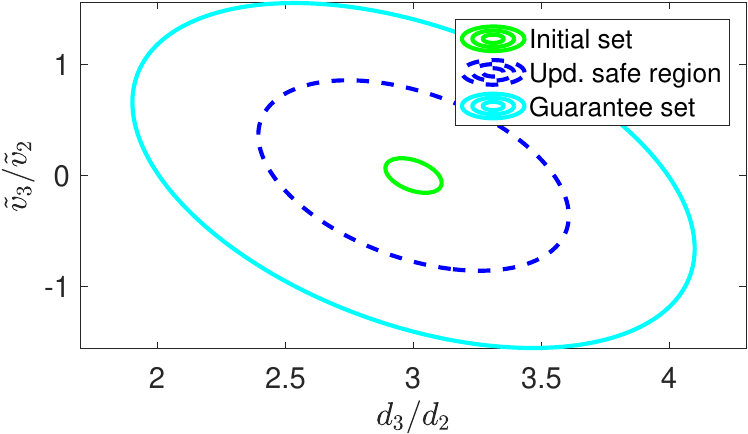}
		\caption{  AGC sets for vehicle $3$.}
	\end{subfigure}
	\begin{subfigure}[t]{0.24\linewidth}
		 \includegraphics[width= \linewidth]{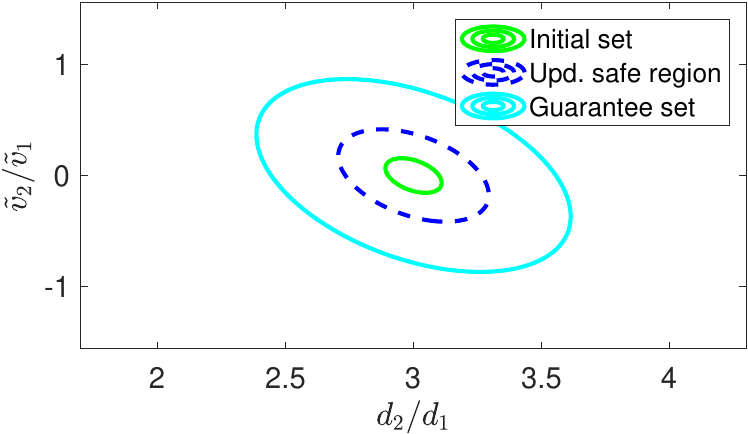}
		\caption{  AGC sets for vehicle $2$.}
	\end{subfigure}
 \begin{subfigure}[t]{0.24\linewidth}
		 \includegraphics[width= \linewidth]{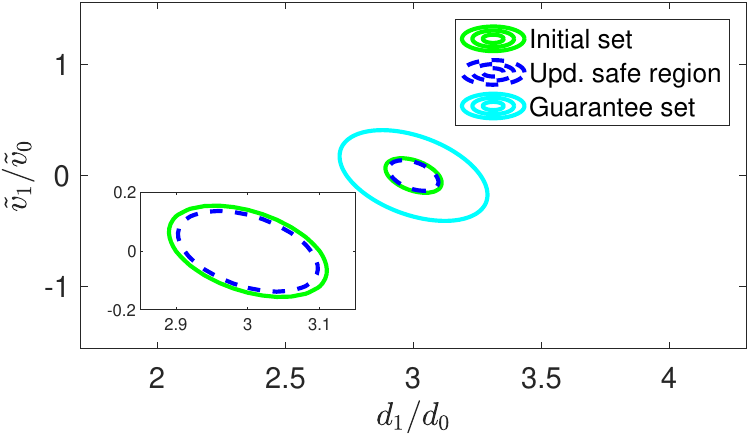}
		\caption{  AGC sets for vehicle $1$.}
 \end{subfigure}
	\setlength{\belowcaptionskip}{-18pt}
	\caption{ Results for the platooning example.}
	\label{fig:platooning_result}	
	\end{figure*}
 
Instead of looking at the relative dynamics in \eqref{eq:platooning_dyn}, we introduce a new coordinate $x_i = (d_i,\tilde{v}_i)$ associated with vehicle $i$, where $d_i = \tilde{p}_i - \tilde{p}_{i-1} - l$ denotes the distance between the front of the $i$-th vehicle and the rear of its proceeding vehicle. See Figure \ref{fig:platooning} for an illustration. The dynamics of this new state are given by
\begin{equation} \label{eq:platooning_new_dyn}
    \begin{aligned}
        \dot{d}_i & = \tilde{v}_i - \tilde{v}_{i-1} \\
        \tildedot{v}_i & =  - (\tilde{v}_i - \tilde{v}_{i-1})^3 + \tilde{u}_i 
    \end{aligned}
\end{equation}

From the analysis above, we can model the vehicular platooning system as an interconnected system. Each subsystem $G_i, i\in \mathcal{I}=\{1,2,\ldots,N\},$ is a continuous-time system $$G_i = (U_i, W_i, X_i, Y_i, X_i^0,\mathcal{T}_i)$$ with $(d_i,  \tilde{v}_i)$ as the state vector, and $U_i =\mathbb{R},  W_i = \left\{\begin{smallmatrix}
    \emptyset, & i = 1; \\
    \mathbb{R}, & i\geq 2,
 \end{smallmatrix}\right.$, $X_i =  \mathbb{R}^2, Y_i =  \left\{\begin{smallmatrix}
    \mathbb{R}, & i \leq N-1; \\
    \emptyset, &  i= N,
 \end{smallmatrix}\right., X_i^0 \subset X_i, \mathcal{T}_i$ characterizes all the trajectories satisfying \eqref{eq:platooning_new_dyn}, and the output map $o_i: \left\{\begin{smallmatrix}(d_i, \tilde{v}_i) \mapsto \tilde{v}_i & i\leq N-1 \\
 (d_i, \tilde{v}_i) \mapsto \emptyset & i = N
 \end{smallmatrix}\right.$. The binary connectivity relation $\mathcal{E} $ is defined that $(j,i)\in \mathcal{E}$ if and only if $j = i-1, i = 2,3,\ldots, N$. One verifies that $\{G_i\}_{i\in \mathcal{I}}$ is compatible for composition with respect to $\mathcal{E} $.  In the following analysis, we assume all the subsystems have the same initial state set $X_i^0 $ and safe region $\myset{Q}_i $, which are $X_i^0 = \{x_i: -x_i^\top Q x_i + q^\top x_i - 899 \geq 0\},$ and the safe region $\myset{Q}_i = \{ x_i: - x_i^\top Q x_i + q^\top x_i - 800 \geq 0\}.$ where $Q = \begin{bsmallmatrix}
     100 & 30 \\
     30 & 50
 \end{bsmallmatrix}$ and $q = (600, 180)$. The initial state set and the safe region are depicted in Fig.\ref{fig:platooning_result}(a). Each subsystem applies a local controller $$\tilde{u}_i =  - (\tilde{v}_i - \tilde{v}_{i-1}) -(d_i - 3) -(d_i - 3)^3, i\in \mathcal{I}. $$
 
 Our task is to verify safety of the interconnected system $ \langle (G_i)_{i\in \myset{I}}, \mathcal{E}\rangle$. In this scenario, we consider $4$ vehicles ($N = 3$).
Algorithm \ref{alg:acyclic_graph} will be used for this example. We will fix the form of $d^i_{j_k}$ in $\text{Proj}_{k}(W_{i})  $ as in $d^i_{j_k}(x_{i-1}) =  a^2 - (\tilde{v}_{i-1} - b)^2 $ to denote a bounded interval, where $a $ and $b$ are determined accordingly.

Consider the vehicle $3$, which is the leaf node in the graph. One calculates $d^3_2(x_{2}) = 2.439 -\tilde{v}_2^2 $ by projecting $\myset{Q}_2$ to the $\tilde{v}_2$ coordinate. By solving \eqref{eq:maximal internal input set} and \eqref{eq:minimal safe set}, one obtains $\delta ^\star =  1.704$ and $\zeta^\star = 1.1147$. That is to say, a local contract $C_3 = ( I_{\underline{W}_3}, I_{\underline{X}_3}, I_{\underline{Y}_3})$ for vehicle $3$ is constructed with
\begin{subequations}
    \begin{align}
        \underline{W}_3 & = \{ \tilde{v}_{2}: 0.735 -\tilde{v}_{2}^2 \geq 0 \}, \\
        \underline{X}_3 & =  \{ x_3: -x_3^\top Q x_3 + q^\top x_3 - 801.115 \geq 0\} \},  \\
        \underline{Y}_3 & =  o_3(\underline{X}_3).
    \end{align}
\end{subequations}
$\underline{W}_3$ can be seen as the requirement from vehicle $3$ to vehicle $2$. Following Algorithm \ref{alg:update safe region} (and also by solving \eqref{eq:alg_safe_region}), the updated safe region for vehicle $2$ is
$ \myset{Q}_2^\prime = \{ x_2:  -x_2^\top Q x_2 + q^\top x_2 -  869.85 \geq 0\}$. The initial set and the guarantee set of vehicle $3$ as well as the updated safe region of vehicle $2$ are illustrated in Figure \ref{fig:platooning_result}(b). We thus follow the same procedures for vehicles $2$ and $1$, and obtain the results in Figure \ref{fig:platooning_result}(c) and Figure \ref{fig:platooning_result}(d). Moreover, we calculate the assumption set $ \underline{W}_1  = \{ \tilde{v}_{0}: 0.019 -\tilde{v}_{0}^2 \geq 0 \},$ which holds true since $\tilde{v}_{0}(t) = 0$ for all $t$. Thus, we have constructed compatible local contracts for each vehicle. Following Proposition \ref{prop:iAGC_composition}, we conclude that the platooning system is safe.

\subsection{Room temperature: a homogeneous example}
In the second example, we consider a room temperature regulation problem \cite{girard2015safety} in a ring-shaped building as illustrated in Fig. \ref{fig:room_scenario}. Each room has its temperature $x_i$, which is affected by neighboring rooms, the heater, and the environment as follows
\begin{equation*}
    \begin{aligned}
        \dot{x}_{i}(t) & = \alpha(x_{i+1} + x_{i-1} - 2x_{i})+ \beta(t_e - x_i) + \gamma(t_h - x_i)u_i,\\
        y_i(t) & =  x_i,
    \end{aligned}
\end{equation*}
where $x_{i+1}, x_{i-1}$ are the temperatures of room $i+1$ and $i-1$ (and we conveniently let $x_0(t) = x_N(t), x_{N+1}(t) = x_1(t)$), $t_e, t_h$ are the temperatures of the environment and the heater, respectively. $\alpha, \beta, \gamma$ are the respective conduction factors for the neighboring room, the environment, and the heater. $u_i$ denotes the valve control to the heater. Choose $(t_e, t_h, \alpha, \beta, \gamma) = (-1, 50, 0.05, 0.008, 0.004) $, and $$u_i = 0.05(x_{i+1} + x_{i-1} - 2x_i) + 0.05(25 -x_i).$$ The  initial set is $\myset{S}_{I,i} = [24,26]$ and the safe region is $\myset{Q}_i = [20,30]$ for every room. 

We can model the temperature system as an interconnected system. In particular, each subsystem $G_i = (U_i, W_i, X_i, Y_i, X_i^0,\mathcal{T}_i)$ has $x_i$ as the state, $(x_{i-1},x_{i+1})$ as the internal input, $u_i$ as the external input, $o_i(x_i) = x_i$, $U_i , X_i, Y_i= \mathbb{R}, W_i =  \mathbb{R}^2, X_i^0 = \{x_i: 1 - (x_i - 25)^2 \geq 0\} $, and $\myset{Q}_i = \{x_i: 5^2 - (x_i - 25)^2 \geq 0\}$. The connectivity relation $\myset{E}$ is defined that $(j,i)\in \myset{E}$ if and only if $j = i\pm 1, i = 1,2,\ldots, N$. Per Definition \ref{def:homo_sys}, this is a homogeneous interconnected system and we will apply Algorithm \ref{alg:homo_system} for this example.

At the first iteration, by solving \eqref{eq:maximal internal input set} and \eqref{eq:minimal safe set}, we obtain $\delta^\star = 20.575, \zeta^\star = 0$. Thus, we have constructed a local iAGC $C_i = ( I_{\underline{W}_i}, I_{\underline{X}_i}, I_{\underline{Y}_i}) $ with 
\begin{equation*}
    \begin{aligned}
        &  \underline{W}_i = \{ (x_{i-1}, x_{i+1}): -x_{j}^2 + 50x_{j} - 620.575\geq 0, j = i\pm 1 \}, \\
      &  \underline{X}_i = \underline{Y}_i =\{ x_i: -x_i^2 + 50x_i - 600\geq 0 \}.
    \end{aligned}
\end{equation*}
After assigning the same local contract to all subsystems, one verifies that the contract compatibility condition \eqref{eq:agreeing_condition} does not hold. According to Step 12 of Algorithm \ref{alg:homo_system}, we update the safe region for each room to be $\myset{Q}_i^\prime =  \{ x_{i}: -x_{i}^2 + 50x_{i} - - 620.575\geq 0\}$ and start over. For the second iteration, we obtain local iAGC $C_i = ( I_{\underline{W}_i}, I_{\underline{X}_i}, I_{\underline{Y}_i}) $ with
\begin{equation*}
    \begin{aligned}
        &  \underline{W}_i = \{ (x_{i-1}, x_{i+1}): -x_{j}^2 + 50x_{j} - 622.138\geq 0, j = i\pm 1 \}, \\
      &  \underline{X}_i = \underline{Y}_i =\{ x_i: -x_i^2 + 50x_i - 623.575\geq 0 \}.
    \end{aligned}
\end{equation*}
This time, one verifies that the compatibility condition \eqref{eq:agreeing_condition} holds, and thus, certifies the safety of the room temperature system. An illustration of the assume and the guarantee sets is given in Fig. \ref{fig:room_result}. We note that the computation expense is not related to the number of rooms $N$, and only small-size SOS optimization problems involving $3$ independent variables are to be solved. This is in contrast to a naive SOS approach for synthesizing a barrier function, which will become intractable when thousands of rooms are involved.

\begin{figure}
    \centering
    \includegraphics[width=0.45\linewidth]{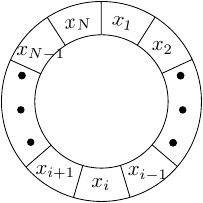}
    \caption{ Room temperature scenario.}
    \label{fig:room_scenario}
\end{figure}

\begin{figure}
    \centering
    \centering
	\begin{subfigure}[t]{0.45\linewidth}
 \includegraphics[width=\linewidth]{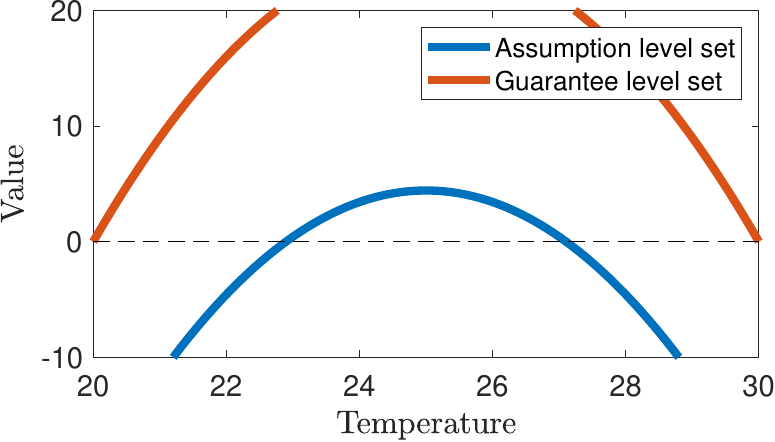}
 \end{subfigure}
 \begin{subfigure}[t]{0.45\linewidth}
    \includegraphics[width=\linewidth]{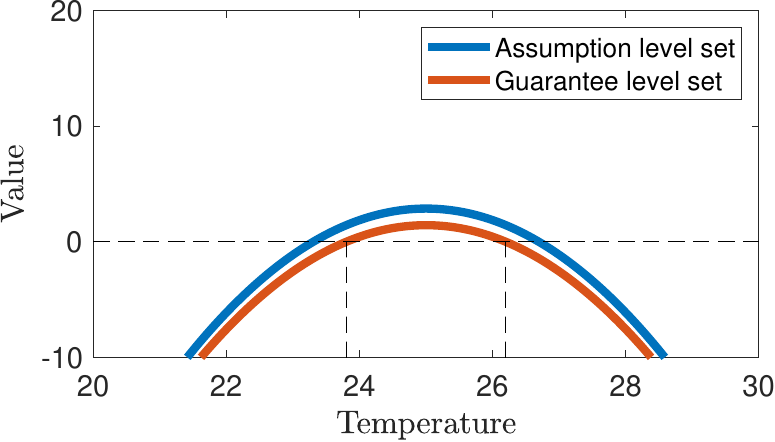}
 \end{subfigure}
    \caption{ Assume/guarantee sets for the room temperature example. Left: iteration 1, right: iteration 2.}
    \label{fig:room_result}
\end{figure}

\section{Conclusions} \label{sec:conclusion}
In this work, we propose a safety verification scheme for interconnected continuous-time nonlinear systems based on assume-guarantee contracts (AGCs) and sum-of-squares (SOS) programs. The proposed scheme uses SOS optimization to calculate local invariance AGCs by synthesizing local (control) barrier functions, and then negotiates among neighboring subsystems at the contract level. If the proposed algorithms find compatible local contracts, safety property of the interconnected system is certified. We also show that the algorithms will terminate in finite steps and will always find a solution when one exists in the case of acyclic connectivity graphs or for homogeneous systems. We also demonstrate the effectiveness of the proposed algorithms for vehicle platooning and room temperature regulation examples.

\bibliographystyle{IEEEtran}
\bibliography{IEEEabrv,references}

\end{document}